\newtheorem{theorem}{Theorem}
\newtheorem{lemma}[theorem]{Lemma}
\newtheorem{corollary}[theorem]{Corollary}
\newtheorem{definition}[theorem]{Definition}
\newtheorem{claim}[theorem]{Claim}
\renewcommand{\vec}[1]{\mathbf{#1}}
\title{Writing on a Dirty Paper in the presence of Jamming}
\author{Amitalok~J.~Budkuley, Bikash~Kumar~Dey and Vinod~M.~Prabhakaran\\
   Emails: \{amitalok, bikash\}@ee.iitb.ac.in, vinodmp@tifr.res.in}
\begin{document}
\maketitle 
\begin{abstract}
In this paper, the problem of writing on a dirty paper in the presence of jamming is examined. We consider an AWGN channel with an additive white Gaussian state and an additive adversarial jammer. The state is assumed to be known non-causally to the encoder and the jammer but not to the decoder. The capacity of the channel in the presence of a jammer is determined. A surprising result that this capacity is equal to the capacity of a relaxed version of the problem, where the state is also known non-causally to the decoder, is proved.
\end{abstract}
\section{Introduction}
We study the problem of `writing on a dirty paper'~\cite{costa} in the presence of an intelligent jammer. In the absence of a jammer, the problem of writing on a dirty paper is a channel coding problem where the transmitter wants to send a message over an AWGN channel with additive interference, and the interference is known to the transmitter non-causally. Under the assumption of only the receiver knowing the interference, we know that the receiver can completely cancel it out. However, in a celebrated result~\cite{costa} Costa showed that even in the scenario when only the transmitter (and not the receiver) possessed the knowledge of the interference, it can be completely mitigated using a novel coding scheme called \textit{Dirty Paper Coding (DPC)}. Through this coding scheme, it was shown that the capacity of this channel is equal to the capacity of one with no state. 

In our problem, we have an additional entity, viz., an adversarial jammer in the communication setup. Not only is the user aware of the additive interference in the channel as in~\cite{costa}, but so also is the jammer. Through the choice of its channel input the user aims to maximize its communication rate over the AWGN channel. In contrast, the adversarial jammer aims to minimize the user communication rate through an appropriately chosen jamming signal. Thus, we have a non-cooperative game between the user and the additive jammer. Since both the user and the jammer compete over the same quantity, this non-cooperative game is a zero sum game~\cite{owen}.

Jamming in communication systems has been widely studied through the use of non-cooperative game theory. Ba\c{s}ar, in~\cite{basar}, studied the problem of communicating a Gaussian random variable over an AWGN channel in the presence of a jammer. In~\cite{medard}, M\'edard studied the channel coding problem of communicating a message over an AWGN channel in the presence of a jammer, where the jammer is correlated to the user transmission. Here, the user and the jammer compete over the channel mutual information and the resulting zero sum game is shown to have an equilibrium saddle point, where, the equilibrium utility is defined as the capacity of the AWGN channel in the presence of a jammer. This game was extended to the MIMO fading scenario in~\cite{kashyap}. Shafiee et. al in~\cite{shafiee} studied this mutual information game between the user and the jammer over the AWGN Multi-Access Channels (MAC). They also studied similar formulations over the Fading AWGN MAC with jammer under different assumptions on the channel state knowledge at the users and the jammer. 

In this paper, we study the problem of writing on a dirty paper in the presence of an additive jammer, where the state is assumed to be additive white Gaussian. Also, both the user and the  jammer are assumed to have a non-causal access to this additive channel state. We formulate and analyse a zero sum mutual information game, called \textit{Costa game} in the sequel, where the user and the jammer, maximize and minimize respectively, a well motivated mutual information quantity (different from that in~\cite{medard}). It is well-known that all Nash equilibria in a zero sum game possess identical value of utility~\cite{owen}. For a channel with an adversarial jammer, define its \textit{capacity} as follows. 
\begin{definition}[Capacity]\label{def:capacity}
The \textbf{capacity} of the channel in the presence of the jammer is the unique Nash equilibrium utility of the zero sum communication game between the user and the jammer. 
\end{definition}
%
We first determine a Nash equilibrium saddle point for the Costa game, where the user strategy is the DPC strategy and the jamming strategy is i.i.d Gaussian jamming independent of state. We then determine the game utility at this equilibrium and thus, establish the capacity of the AWGN channel with additive white Gaussian state and an additive adversarial jammer under the assumption of non-causal state knowledge at the encoder and the jammer. 
We also study a relaxed version of the said problem, where, in addition to the encoder and the jammer, the receiver too has non-causal knowledge of the state. We call the corresponding zero sum game the \textit{Side Information (SI)} game. A Nash equilibrium and the corresponding game utility for this game are determined, which thence establish the capacity of the said channel. 
It is seen that the equilibrium Costa game utility equals the equilibrium SI game utility. This proves an interesting result that the capacity of the AWGN channel with additive white Gaussian state in the presence of a jammer, where the encoder and jammer have non-causal knowledge of this state, is equal to the capacity of its relaxed version, where in addition to the encoder and the jammer, the receiver too has non-causal knowledge of the state.
 
The following is the organization of the paper. In Section~\ref{sec:system:model}, we describe the system model and discuss our problem setup and the resulting non-cooperative zero sum game. We also review certain important and relevant results on the capacity of channels with state. We state the main results of this work in Section~\ref{sec:main:results}. Then, in Section~\ref{sec:costa:game:saddle:point}, we perform the analysis of the game and prove the results stated in Section~\ref{sec:main:results}. The last section is devoted towards stating certain important implications of our results as well making some overall concluding remarks.
%
\section{System Model and Problem Description}\label{sec:system:model}
\subsection{The Dirty Paper Coding Problem Setup with an Additive Jammer}
\begin{figure}[!ht]
  \begin{center}
    \includegraphics[trim=0cm 6cm 0cm 1cm, scale=0.3]{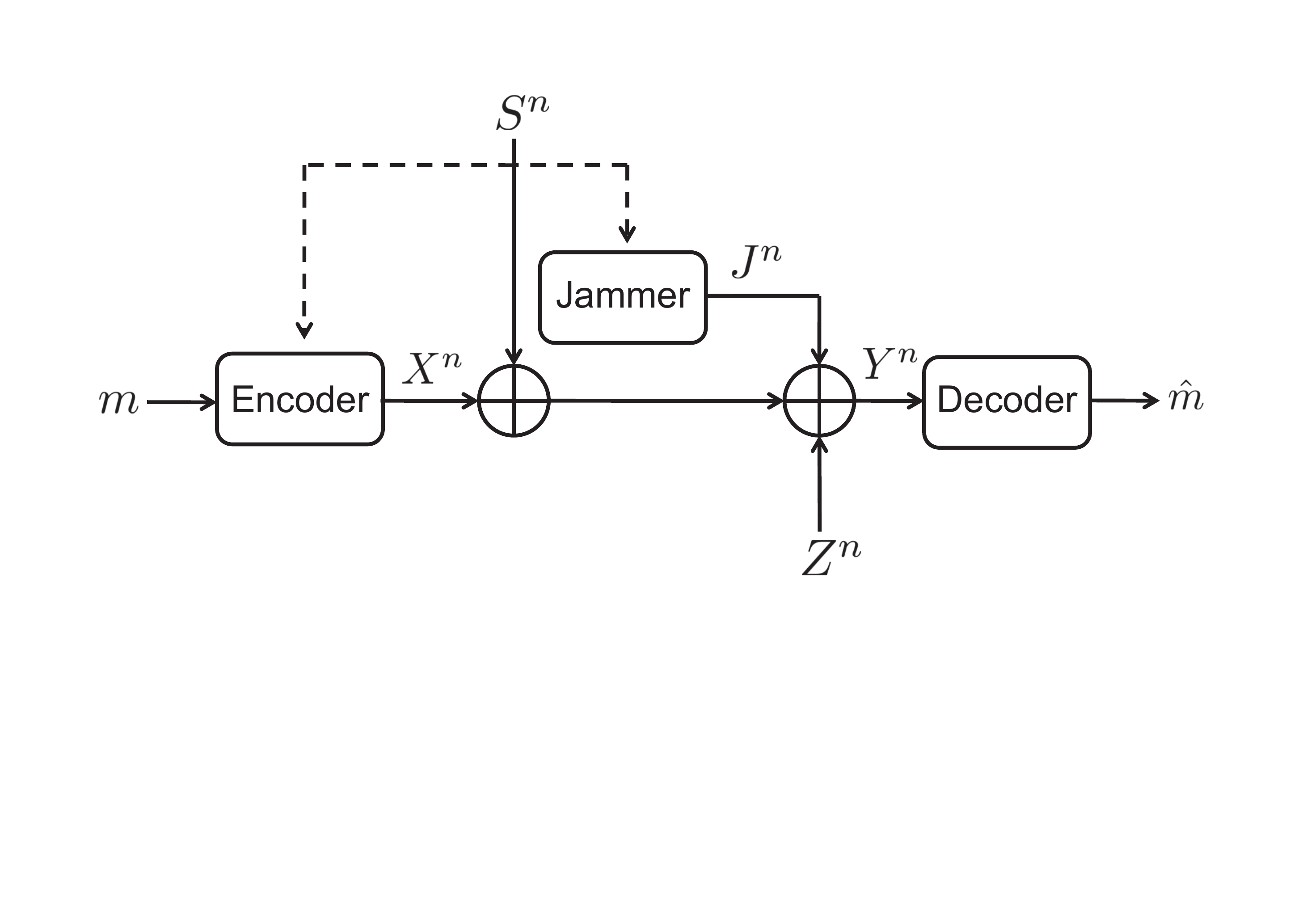}
    \caption{The Costa Game Setup}
    \label{fig:costa:game:setup}
  \end{center}
\end{figure}
Consider the communication setup in Fig.~\ref{fig:costa:game:setup}. Here, the sender wishes to send a message $m$ to the receiver through $n$ uses of a channel in the presence of a jammer. The communication channel, henceforth called the \textit{Costa} channel with a jammer, is an AWGN channel with an additive white Gaussian state and an additive adversarial jammer. Let $Y^n=(Y_1,Y_2,\ldots,Y_n)$ be the signal received at the decoder, where  
\begin{equation*}
Y^n=X^n+S^n+J^n+Z^n.
\end{equation*}
Here, $X^n$, $S^n$, $J^n$ and $Z^n$ are the user's input to the channel, white Gaussian state, jammer's input to the channel and the independent channel noise respectively. Since the state is white Gaussian, $S_i$, $i=1,2,\dots,n$ are i.i.d and each $S_i\sim\mathcal{N}(0,\sigma_S^2)$. The channel noise $Z^n$ is i.i.d Gaussian noise with each $Z_i\sim \mathcal{N}(0,\sigma^2)$. The state $S^n$ is known non-causally to only the encoder and the jammer but not the decoder. Given the message $m$ and the state $S^n$, the encoder picks a codeword $X^n(m,S^n)$ and transmits it on the channel over $n$ successive time instants. For ease of notation we drop the reference to $m$ and $S^n$. The codeword $X^n$ is so chosen as to satisfy the user power constraint $P_U$ i.e. $\sum_{i=1}^n X_i^2 \leq nP_U$. The jammer signal $J^n$, which is a function of the state $S^n$, is chosen as to satisfy the jammer power constraint $P_J$ i.e. $\sum_{i=1}^n J_i^2 \leq n\bar{P_J}$. Note that the jammer is assumed to be unable to listen to the user transmission, and hence, is conditionally independent of the user signal $X^n$ given $S^n$. 

\subsection{Capacity of the channel in the absence of a jammer}
We now review some known results on the capacity of the Costa channel in the absence of a jammer. The discrete alphabet version of the problem in Fig.~\ref{fig:costa:game:setup}, in the absence of the jammer, was solved in~\cite{gelfand}. Here, it was shown that the capacity of a channel with a random state, where the state is known non-causally at the encoder only, is given as
\begin{equation} \label{eq:GP}
C_{GP}=\max_{p(u,x|s)} I(U;Y)-I(U;S).
\end{equation}
Here, $U$ is an auxiliary random variable and the subscript $GP$ stands for Gelfand-Pinsker.

The Gaussian version of the above problem with additive white Gaussian state was studied by Costa in~\cite{costa}. Here, the surprising result, that the effect of the additive state can be completely eliminated under state knowledge only at the encoder, was proved. Thus, the the capacity of an AWGN channel with a white Gaussian state, where the state is known non-causally only at the encoder, is equal to the capacity of the standard AWGN channel with no state. This capacity, denoted by $C_C$ where the subscript $C$ stands for Costa, is 
\begin{equation*}
C_C=\frac{1}{2} \log\left(\ 1+\frac{P_U}{\sigma^2}\right).
\end{equation*}
The coding scheme to achieve this capacity is popularly known as the `Dirty Paper Coding' scheme. For a length $n$ code~\cite{elgamal-kim}, the optimal auxiliary variable $U^n$ is i.i.d with $U_i=X_i+\alpha S_i$, $\forall i=1,2,\dots,n$, where $X^n$ is i.i.d with $X_i\sim\mathcal{N}(0,P_U)$, $i=1,2,\dots,n$ and $\alpha=P_U/(P_U+\sigma^2)$. 

\subsection{The Non-Cooperative Costa Game Setup}
We model the user and jammer interaction in Fig.~\ref{fig:costa:game:setup} as a non-cooperative game. We call this game the \textit{Costa} game. In a manner similar to~\cite{medard}, the user and the jammer compete over a channel mutual information quantity. Appealing to~\eqref{eq:GP}, let this quantity be $I(U^n;Y^n)-I(U^n;S^n)$, where $U^n$ is an auxiliary random variable. The mutual information is evaluated for some joint distribution $p(s^n,u^n,x^n,j^n,y^n)=p(s^n)p(u^n,x^n|s^n)p(j^n|s^n)p(y^n|x^n,j^n)$. The user is the maximizing player and chooses $p(u^n,x^n|s^n)$ under its power constraint $P_U$ so as to maximize $I(U^n;Y^n)-I(U^n;S^n)$. The jammer is the minimizing player and chooses $p(j^n|s^n)$ under its power constraint $P_J$ so as to minimize $I(U^n;Y^n)-I(U^n;S^n)$. Note that $p(s^n)$ and $p(y^n|x^n,s^n,j^n)$ are fixed. 
%
From Definition~\ref{def:capacity}, we know that the equilibrium utility value of this game is defined as the capacity of the Costa channel in the presence of a jammer. We therefore analyse this game and determine its equilibrium utility. 
Note that the strategy spaces of the user and the jammer are both convex,
and thus the game has a Nash equilibrium~\cite{owen}.

\section{The Main Result}\label{sec:main:results}
Our main result is the characterization of an equilibrium of the Costa game and thence, the determination of the capacity of the AWGN channel with an additive state in the presence of an adversarial jammer, where the state is known non-causally to both the user and the jammer. The equilibrium involves
an i.i.d. Gaussian jamming strategy. For such a jamming strategy, the jamming
signal and the noise together act like white Gaussian noise. The resulting
channel is the standard AWGN channel with additive state analysed by Costa~\cite{costa}.
The user can employ DPC to achieve rates upto $1/2 \log(1+P_U/(P_U+P_J+\sigma^2))$ as
shown in~\cite{costa}. Such a user strategy for i.i.d. Gaussian jamming will
henceforth be referred as DPC user strategy.
We now state our main result.

\begin{theorem}\label{thm:costa:game:saddle:point}
The Dirty Paper Coding user strategy and the i.i.d. Gaussian jamming strategy
independent of the state form a Nash equilibrium of the Costa game.
\end{theorem}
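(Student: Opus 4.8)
The plan is to exhibit the pair $(p_U^\star,p_J^\star)$---the DPC user strategy together with i.i.d.\ Gaussian jamming $J^n\sim\mathcal N(0,P_J I)$ drawn independently of the state---as a saddle point of the Costa game utility $u(p_U,p_J)=I(U^n;Y^n)-I(U^n;S^n)$; that is, to show $u(p_U,p_J^\star)\le u(p_U^\star,p_J^\star)\le u(p_U^\star,p_J)$ for every admissible user strategy $p_U$ and jammer strategy $p_J$, with equality throughout at $(p_U^\star,p_J^\star)$. Since the Costa game is zero-sum, a saddle point is automatically a Nash equilibrium~\cite{owen}, so this will finish the proof.

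For the first inequality, note that against $p_J^\star$ the sum $J^n+Z^n$ is i.i.d.\ $\mathcal N(0,(P_J+\sigma^2)I)$ and independent of $(X^n,S^n)$, so $Y^n=X^n+S^n+(J^n+Z^n)$ is precisely the AWGN channel with additive white Gaussian state of Costa~\cite{costa}, with noise variance inflated to $P_J+\sigma^2$. By the single-letterization underlying the Gelfand--Pinsker converse~\cite{gelfand,elgamal-kim} (the characterization~\eqref{eq:GP} is additive for memoryless state) and Costa's evaluation of $C_{GP}$, every user strategy satisfies $I(U^n;Y^n)-I(U^n;S^n)\le n\cdot\tfrac12\log\!\big(1+\tfrac{P_U}{P_J+\sigma^2}\big)$, with equality for the DPC choice $U_i=X_i+\alpha S_i$, $X_i\sim\mathcal N(0,P_U)$, $\alpha=P_U/(P_U+P_J+\sigma^2)$. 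Hence $u(p_U,p_J^\star)\le u(p_U^\star,p_J^\star)$.

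The second inequality is the heart of the matter. Fix the DPC strategy $p_U^\star$. Because, conditioned on $S^n$, the variable $U^n=X^n+\alpha S^n$ is a deterministic shift of $X^n$ and $J^n$ is a constant, the quantities $I(U^n;S^n)$, $I(U^n;Y^n\mid S^n)=\tfrac n2\log(1+P_U/\sigma^2)$, and $h(U^n\mid Y^n,S^n)=h(X^n\mid X^n+Z^n)$ are all independent of the jammer's choice. Substituting these into $u=I(U^n;Y^n\mid S^n)-\big(h(U^n\mid Y^n)-h(U^n\mid Y^n,S^n)\big)$ gives the clean identity $u=\tfrac n2\log(2\pi e\, P_U)-h(U^n\mid Y^n)$, valid for \emph{every} jammer strategy; so it suffices to show $h(U^n\mid Y^n)$ is maximized by $p_J^\star$. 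The approach is to single-letterize, $h(U^n\mid Y^n)\le\sum_i h(U_i\mid Y_i)\le\sum_i\tfrac12\log(2\pi e\,\ell_i)$, where $\ell_i$ is the linear-MMSE of $U_i$ given $Y_i$; since $X^n\perp S^n$ (hence $X^n\perp J^n$) under DPC, $\ell_i$ depends on the jammer only through $v_i=\mathrm{Var}(J_i)$ and $c_i=\mathrm{Cov}(S_i,J_i)$. Concavity of $\log$ then gives $h(U^n\mid Y^n)\le\tfrac n2\log(2\pi e\,\bar\ell)$ with $\bar\ell=\tfrac1n\sum_i\ell_i$, and it remains to maximize $\bar\ell$ over $\{(v_i,c_i):\tfrac1n\sum_i v_i\le P_J,\ c_i^2\le\sigma_S^2v_i\}$. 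This last step is an elementary optimization---$\ell_i$ is concave and increasing in $v_i$, and, for the Costa-optimal $\alpha$, it is maximized over $c_i$ at $c_i=0$---whose solution is $v_i\equiv P_J$, $c_i\equiv0$; at that point $\bar\ell$ equals the linear-MMSE in the Gaussian-jammed Costa channel, and the resulting bound on $h(U^n\mid Y^n)$ is attained with equality by $p_J^\star$ (there $(U^n,Y^n)$ is jointly Gaussian and i.i.d.). This gives $u(p_U^\star,p_J)\ge u(p_U^\star,p_J^\star)$, and the two inequalities together establish the saddle point.

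The main obstacle is the second inequality, and in particular the claim that i.i.d.\ Gaussian jamming maximizes $h(U^n\mid Y^n)$: the jammer, being a function of the state, is correlated with $U^n$ through the common $S^n$, so the usual ``Gaussian noise is the worst additive noise'' argument does not apply directly and the correlation has to be tracked---which is exactly what the LMMSE reduction and the ensuing per-symbol second-moment optimization accomplish (one must in particular verify that the symbol-wise optimum under the \emph{average} power constraint is $(P_J,0)$ and not some state-correlated choice). The first inequality, by contrast, is essentially Costa's theorem for a channel with inflated noise, and the passage from the two inequalities to ``Nash equilibrium'' is immediate for a zero-sum game.
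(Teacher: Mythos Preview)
Your overall plan---reduce the jammer's best response to maximizing $h(U^n\mid Y^n)$, single-letterize, bound each $h(U_i\mid Y_i)$ by $\tfrac12\log(2\pi e\,\ell_i)$ with $\ell_i$ the linear-MMSE, and then optimize---is essentially the paper's approach rearranged. The paper works at the block level: it writes $h(U^n\mid Y^n)=h(U^n-\alpha Y^n\mid Y^n)\le h(W^n)$ with $W^n=(1-\alpha)X^n-\alpha J^n-\alpha Z^n$, bounds $|\Lambda_W|$ by Hadamard plus AM--GM, and checks that for i.i.d.\ Gaussian jamming $W'^n$ is \emph{independent} of $Y'^n$, so every inequality is tight. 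The first inequality (user best response) is identical to yours: both invoke Costa's result for inflated noise.

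There is, however, a genuine gap in your ``elementary optimization.'' The claim that, for the Costa-optimal $\alpha$, the symbolwise LMMSE $\ell_i(v_i,c_i)$ is maximized over $c_i$ at $c_i=0$ is false. A direct computation gives
\[
\frac{\partial \ell_i}{\partial c_i}\Big|_{c_i=0}\;\propto\;-(v_i-P_J),
\]
so for $v_i\neq P_J$ the unconstrained optimizer in $c_i$ is $c_i=P_J-v_i$, not $0$. Your two-step argument (first set $c_i=0$, then optimize $v_i$) therefore does not establish that $(P_J,0)$ is a \emph{global} maximizer of $\bar\ell$ under $\tfrac1n\sum v_i\le P_J$. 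The conclusion is nevertheless true, and the clean way to see it is precisely the paper's trick: since $\ell_i$ is a LMMSE, it is bounded above by the error of the \emph{particular} estimator $\alpha Y_i$, namely $\mathrm{Var}(U_i-\alpha Y_i)=(1-\alpha)^2P_U+\alpha^2 v_i+\alpha^2\sigma^2$ (the state drops out because $U_i-\alpha Y_i=(1-\alpha)X_i-\alpha J_i-\alpha Z_i$). Averaging and using $\bar v\le P_J$ gives $\bar\ell\le(1-\alpha)^2P_U+\alpha^2P_J+\alpha^2\sigma^2$, with equality at $(v_i,c_i)\equiv(P_J,0)$. So the missing idea in your optimization is exactly the $U-\alpha Y$ substitution that the paper uses from the outset.

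A smaller point: your derivation of the identity $u=\tfrac n2\log(2\pi e\,P_U)-h(U^n\mid Y^n)$ invokes ``$J^n$ is a constant given $S^n$'' and hence that $I(U^n;Y^n\mid S^n)$ and $h(U^n\mid Y^n,S^n)$ are jammer-independent. That is only true for deterministic jammers; a randomized $p(j^n\mid s^n)$ contributes extra noise given $S^n$. The identity itself is still correct and follows in one line from $I(U^n;Y^n)-I(U^n;S^n)=h(U^n\mid S^n)-h(U^n\mid Y^n)$, so this does not break your argument---but the intermediate justification should be dropped.
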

%

As a consequence of the above theorem, we have the following Corollary.
\begin{corollary}\label{cor:capacity}
The capacity of the AWGN channel with an additive jammer and an additive white Gaussian state, where the state is known non-causally to both the encoder and the jammer, is 
\begin{equation}\label{eq:costa:game:saddle:point}
C_{C-J}=\frac{1}{2} \log\left(1+\frac{P_U}{\sigma^2+P_J} \right).
\end{equation}
\end{corollary}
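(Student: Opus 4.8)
The plan is to verify the two saddle-point inequalities directly. Write $u(\cdot,\cdot)$ for the game payoff $I(U^n;Y^n)-I(U^n;S^n)$ (normalised by the blocklength $n$), let $\mathrm{DPC}$ denote the Dirty Paper Coding user strategy and $J^{*}$ the i.i.d.\ $\mathcal{N}(0,P_J)$ jamming strategy independent of the state. I would show that for every admissible user strategy $x$ and every admissible jammer strategy $y$,
\begin{equation*}
u(x,J^{*})\;\le\;u(\mathrm{DPC},J^{*})\;\le\;u(\mathrm{DPC},y),
\end{equation*}
i.e.\ $\mathrm{DPC}$ is a best response to $J^{*}$ and $J^{*}$ is a best response to $\mathrm{DPC}$; this is exactly the Nash-equilibrium condition in a zero-sum game, and evaluating the common value $u(\mathrm{DPC},J^{*})$ gives Corollary~\ref{cor:capacity}.

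For the first inequality, note that under $J^{*}$ the jamming signal and the channel noise combine into a single i.i.d.\ $\mathcal{N}(0,\sigma^2+P_J)$ process that is independent of both $X^n$ and $S^n$; the channel the user faces is then exactly the Costa channel of~\cite{costa} with noise variance $\sigma^2+P_J$. By the Gelfand--Pinsker converse~\cite{gelfand} (the standard single-letterisation of $I(U^n;Y^n)-I(U^n;S^n)$ via the Csisz\'ar sum identity and the auxiliary $(U^n,Y^{i-1},S_{i+1}^n)$) together with Costa's Gaussian evaluation, every user strategy against $J^{*}$ achieves at most $\tfrac12\log\!\big(1+P_U/(\sigma^2+P_J)\big)$, with $\mathrm{DPC}$ attaining it. Hence $u(x,J^{*})\le u(\mathrm{DPC},J^{*})$.

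The second inequality is the crux and the part I would spend most effort on. Fix $\mathrm{DPC}$: $X^n$ is i.i.d.\ $\mathcal{N}(0,P_U)$ and independent of $S^n$, $U^n=X^n+\alpha S^n$ with $\alpha=P_U/(P_U+\sigma^2+P_J)$ (the Costa coefficient for noise $\sigma^2+P_J$), and let $J^n=f(S^n)$ be an \emph{arbitrary} jammer obeying $\tfrac1n\sum_i\mathbb{E}[J_i^2]\le P_J$. I would start from
\begin{equation*}
I(U^n;Y^n)-I(U^n;S^n)=h(U^n\mid S^n)-h(U^n\mid Y^n)
\end{equation*}
and observe that $h(U^n\mid S^n)=h(X^n\mid S^n)=h(X^n)=\tfrac n2\log(2\pi eP_U)$, which is \emph{independent} of the jammer. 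For the subtracted term I would use the DPC ``pre-subtraction'': since $U^n-\alpha Y^n=(1-\alpha)X^n-\alpha J^n-\alpha Z^n$ no longer contains $S^n$, translation by a function of $Y^n$ followed by conditioning gives
\begin{equation*}
h(U^n\mid Y^n)\;\le\;h\big((1-\alpha)X^n-\alpha J^n-\alpha Z^n\big)\;\le\;\sum_{i=1}^{n}\tfrac12\log\!\Big(2\pi e\big[(1-\alpha)^2P_U+\alpha^2(\mathbb{E}[J_i^2]+\sigma^2)\big]\Big),
\end{equation*}
where the cross terms vanish because $X^n$ and $Z^n$ are zero-mean and independent of $J^n$ (as $J^n$ is a function of $S^n$ and $X^n\perp S^n$). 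Concavity of $\log$, the jammer's power constraint, and the key algebraic fact that the Costa coefficient is precisely the one for which $(1-\alpha)^2P_U+\alpha^2(\sigma^2+P_J)=(1-\alpha)P_U$ then yield $\tfrac1n h(U^n\mid Y^n)\le\tfrac12\log\!\big(2\pi e(1-\alpha)P_U\big)$, hence $u(\mathrm{DPC},y)\ge\tfrac12\log\tfrac{1}{1-\alpha}=\tfrac12\log\!\big(1+P_U/(\sigma^2+P_J)\big)$ for every $y$. When $y=J^{*}$ every inequality in this chain is tight --- $U^n-\alpha Y^n$ is then jointly Gaussian with, and uncorrelated with, $Y^n$, hence independent of it and i.i.d.\ across time --- so $u(\mathrm{DPC},J^{*})=\tfrac12\log\!\big(1+P_U/(\sigma^2+P_J)\big)$, which closes both inequalities and, via this value, proves the theorem and Corollary~\ref{cor:capacity}.

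I expect the main obstacle to be precisely this second inequality: controlling $h(U^n\mid Y^n)$ \emph{uniformly} over all jamming strategies, in particular ones that are non-Gaussian and arbitrarily correlated with the state, since such correlation would a priori seem to let the jammer steer its signal so as to hurt the DPC codeword more. The resolution is structural rather than computational --- after the DPC pre-subtraction the state cancels completely, so the jammer's correlation with $S^n$ is rendered irrelevant and only its power matters, and the Costa choice of $\alpha$ makes the resulting maximum-entropy bound coincide exactly with the value achieved against white Gaussian jamming. A minor technical point on the other side is the single-letterisation in the first inequality, which is routine via the standard Gelfand--Pinsker converse.
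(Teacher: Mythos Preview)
Your proof is correct and follows essentially the same route as the paper: both directions of the saddle-point are established as you describe, and in particular the crucial jammer-side inequality is handled via the same DPC pre-subtraction $U^n-\alpha Y^n=(1-\alpha)X^n-\alpha J^n-\alpha Z^n$ (which eliminates $S^n$), a max-entropy/Hadamard-type bound together with the power constraint, and the observation that the Costa choice of $\alpha$ renders $W'^n$ uncorrelated with (hence independent of) $Y'^n$ so that equality is attained at $J^{*}$. Your use of entropy subadditivity plus concavity of $\log$ is exactly the paper's Hadamard-plus-AM--GM step in slightly different clothing, and your invocation of the Gelfand--Pinsker converse for the user side is the content of the paper's Lemma~16.
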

Here, the subscript $J$ denotes the jammer. The proof of our main result will be discussed in the following section.
\section{The Costa Game Saddle Point Analysis}\label{sec:costa:game:saddle:point}
Toward the analysis of the Costa Game saddle point, we begin by studying a relaxed version of this game which we call the Side Information (SI) game. 
%
\subsection{The Side Information (SI) Game}
Most of the system setup in the SI game is identical to the setup in Fig.~\ref{fig:costa:game:setup} except that $S^n$ is also known non-causally at the decoder. This communication channel is called in the \textit{Side Information (SI)} channel in the presence of a jammer. This is shown in Fig.~\ref{fig:side:information:game:setup}. 
\begin{figure}[!ht]
  \begin{center}
    \includegraphics[trim=0cm 6cm 0cm 1cm, scale=0.3]{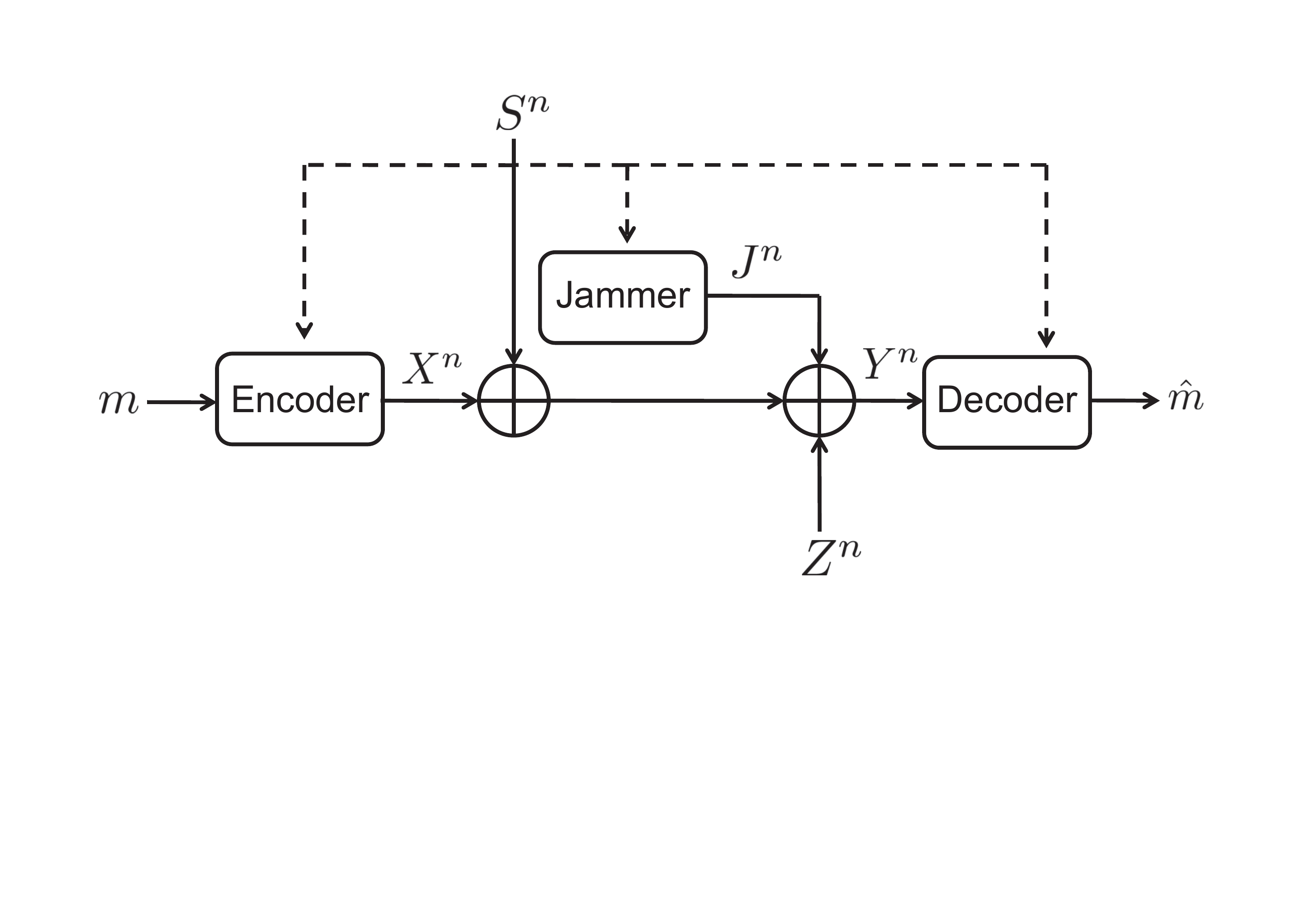}
    \caption{The Side Information Game Setup}
    \label{fig:side:information:game:setup}
  \end{center}
\end{figure}
%
Before we discuss the SI game utility, we need to review a related result on the capacity of a channel with state and no jamming adversary, with the state information available at both the encoder and the decoder. We first state the result for the discrete version of the problem.
\begin{lemma}[Chap. 7,~\cite{elgamal-kim}]\label{lem:capacity:si:discrete}
The capacity of a channel with a random state, where the state is known non-causally at both the encoder and the decoder, is 
\begin{equation}\label{eq:SI}
C=\max_{p(x|s)} I(X;Y|S).
\end{equation}
\end{lemma}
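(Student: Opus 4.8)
The plan is to prove the two inequalities separately: achievability, $C \ge \max_{p(x|s)} I(X;Y|S)$, and the converse, $C \le \max_{p(x|s)} I(X;Y|S)$. For achievability, the cleanest route is to reduce the problem to the Gelfand--Pinsker setting of~\eqref{eq:GP}. Since the decoder also observes $S^n$, I would regard the channel output as the pair $(Y,S)$; this is then a channel with state $S$ known non-causally at the encoder only, so by~\eqref{eq:GP} every rate below $\max_{p(u,x|s)} \big[ I(U;Y,S) - I(U;S) \big]$ is achievable. Now $I(U;Y,S) - I(U;S) = I(U;Y|S)$, and because $U \to (X,S) \to Y$ is a Markov chain the data-processing inequality gives $I(U;Y|S) \le I(X;Y|S)$, with equality attained by the deterministic choice $U = X$. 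Hence every $R < \max_{p(x|s)} I(X;Y|S)$ is achievable. (Alternatively one can argue directly: for a finite state alphabet, split the length-$n$ block into sub-blocks indexed by the value of the state; since $S^n$ is i.i.d., its type concentrates around $p(s)$, so the sub-block for state $s$ has length $\approx n\,p(s)$; on it one uses a capacity-achieving code for the DMC $p(y|x,s)$ with input p.m.f.\ $p^\ast(x|s)$; the decoder knows $S^n$, hence the partition, and the aggregate rate approaches $\sum_s p(s)\, I(X;Y|S{=}s) = I(X;Y|S)$.)

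For the converse I would start from Fano's inequality, $nR \le I(M;Y^n,S^n) + n\epsilon_n$ with $\epsilon_n \to 0$. Since the message $M$ is independent of $S^n$, $I(M;Y^n,S^n) = I(M;Y^n|S^n)$. Expanding by the chain rule, $I(M;Y^n|S^n) = \sum_{i=1}^n I(M; Y_i \mid Y^{i-1}, S^n)$. The memorylessness of the channel yields the Markov chain $(M, Y^{i-1}, S^{i-1}, S_{i+1}^n) \to (X_i, S_i) \to Y_i$, so each summand is at most $I(X_i;Y_i|S_i) \le \max_{p(x|s)} I(X;Y|S)$. Summing over $i$ and letting $n \to \infty$ gives $R \le \max_{p(x|s)} I(X;Y|S)$. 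Finally, the maximum is indeed attained, since $I(X;Y|S) = \sum_s p(s)\, I(X;Y|S{=}s)$ is a continuous function of the finitely many parameters $\{p(x|s)\}$ ranging over a compact set.

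The hard part is essentially bookkeeping rather than a genuinely new idea. In the achievability I must make sure the ``augment the output with $S$'' reduction is legitimate --- that the Gelfand--Pinsker formula applies verbatim to the channel $X \to (Y,S)$ with encoder-only state --- or, if the direct multiplexing argument is used instead, that the fluctuations of the state type are asymptotically negligible and do not cost rate. In the converse the only subtlety is identifying the correct Markov chain that licenses replacing $(M, Y^{i-1}, S^n)$ by $X_i$ inside the conditional mutual information. The underlying message is simply that a channel with state known at both ends behaves like a family of ordinary channels time-multiplexed according to the state distribution, which is why~\eqref{eq:SI} has exactly the form one would guess.
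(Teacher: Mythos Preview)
Your argument is correct and is essentially the standard textbook proof: achievability by augmenting the output with $S$ and invoking the Gelfand--Pinsker formula~\eqref{eq:GP} (or, equivalently, by multiplexing codes over state values), and a converse via Fano's inequality, independence of $M$ and $S^n$, and the memoryless Markov chain $(M,Y^{i-1},S^{\setminus i}) \to (X_i,S_i) \to Y_i$.

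There is nothing to compare against in the paper, however. Lemma~\ref{lem:capacity:si:discrete} is stated purely as a cited background result from Chapter~7 of~\cite{elgamal-kim}; the paper does not supply its own proof. Your write-up is the standard one found in that reference.
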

For the Gaussian version of this problem, the capacity is given in the following lemma.
\begin{lemma}[Chap. 7,~\cite{elgamal-kim}]\label{lem:capacity:si:gaussian}
The capacity of the AWGN channel with a white Gaussian state known both at the encoder and decoder is 
\begin{equation*}\label{eq:capacity:gaussian:side:information}
C_{SI}=\frac{1}{2}\log(1+\frac{P_U}{\sigma^2}).
\end{equation*}
The maximizing user channel input distribution is i.i.d. Gaussian.
\end{lemma}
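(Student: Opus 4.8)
The plan is to deduce this from Lemma~\ref{lem:capacity:si:discrete} together with a Gaussian maximum-entropy argument, i.e. to evaluate the single-letter optimum $\max_{p(x|s)} I(X;Y|S)$ explicitly for the AWGN-with-state model. First I would note that the characterization in Lemma~\ref{lem:capacity:si:discrete} extends in the usual way to the AWGN setting under the average input-power constraint, so that
\begin{equation*}
C_{SI}=\max_{p(x|s):\,E[X^2]\le P_U} I(X;Y|S),\qquad Y=X+S+Z,
\end{equation*}
with $S\sim\mathcal N(0,\sigma_S^2)$, $Z\sim\mathcal N(0,\sigma^2)$ independent of $(X,S)$. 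The conceptual content is that the decoder, knowing $S$, can form $Y-S=X+Z$, so the problem collapses to a plain AWGN channel and state knowledge at the encoder buys nothing; the proof just has to make this quantitative.

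For the converse I would write $I(X;Y|S)=h(Y|S)-h(Y|X,S)$. Since conditioning on $S$ treats $S$ as a constant, $h(Y|S)=h(X+Z|S)$, and $h(Y|X,S)=h(Z)$ because $Z$ is independent of $(X,S)$. For each fixed $s$, the Gaussian law maximizes differential entropy at a given second moment, so $h(X+Z\mid S=s)\le\frac12\log\!\big(2\pi e(\mathrm{Var}(X\mid S=s)+\sigma^2)\big)$. Averaging over $S$ and using concavity of $\log$ (Jensen) followed by $E_S[\mathrm{Var}(X\mid S=s)]\le E[X^2]\le P_U$ gives $h(X+Z|S)\le\frac12\log\!\big(2\pi e(P_U+\sigma^2)\big)$, hence $I(X;Y|S)\le\frac12\log(1+P_U/\sigma^2)$.

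For achievability I would exhibit $X\sim\mathcal N(0,P_U)$ chosen independently of $S$ (thus i.i.d. across channel uses); then $I(X;Y|S)=h(X+Z)-h(Z)=\frac12\log(1+P_U/\sigma^2)$, which meets the upper bound and simultaneously shows the maximizing input distribution is i.i.d. Gaussian, settling the last sentence of the lemma as well. The only genuinely delicate step is the converse bound: one must handle the fact that $p(x|s)$ is arbitrary — in particular $X$ may be correlated with $S$ — which is why the conditional maximum-entropy inequality plus Jensen's inequality are needed to reduce to the unconditioned variance, and one should also be careful that the cost constraint is the \emph{average} $E[X^2]\le P_U$ rather than a per-realization constraint. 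The extension of the coding theorem of Lemma~\ref{lem:capacity:si:discrete} to the continuous-alphabet, power-constrained case is standard and I would invoke it rather than reprove it.
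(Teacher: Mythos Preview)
Your argument is correct and is essentially the standard textbook derivation. Note, however, that the paper does not actually prove this lemma: it is stated with a citation to Chap.~7 of~\cite{elgamal-kim} and used as a known fact, so there is no in-paper proof to compare against. What you wrote is precisely the proof one finds in that reference --- reduce to the single-letter formula of Lemma~\ref{lem:capacity:si:discrete}, observe that the decoder's knowledge of $S$ makes $Y-S=X+Z$ available so the problem is a plain AWGN channel, and then run the Gaussian maximum-entropy converse together with the obvious Gaussian achievability.
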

Here, the subscript $SI$ stands for Side Information.
Coming back to the SI game, let us now define the game utility. Alluding to the expression in~\eqref{eq:SI}, let the SI game utility be the mutual information quantity $I(X^n;Y^n|S^n)$. 
Here, the the mutual information is evaluated for some joint distribution $p(s^n,x^n,j^n,y^n)=p(s^n)p(x^n|s^n)p(j^n|x^n)p(y^n|x^n,j^n)$. The user is the maximizing player while the jammer is the minimizing player in this zero sum game. The user chooses $p(x^n|s^n)$ given its power constraint $P_U$ so as to maximize $I(X^n;Y^n|S^n)$. The jammer chooses $p(j^n|s^n)$ under its power constraint $P_J$ so as to minimize $I(X^n;Y^n|S^n)$. Note again that $p(s^n)$ and $p(y^n|x^n,s^n,j^n)$ are fixed. 

A Nash equilibrium of the SI game is now given in the following lemma. Here, the equilibrium involves
an i.i.d. Gaussian jamming strategy. Hence, the jamming
signal and the noise together act like white Gaussian noise. The resulting
channel is the standard AWGN channel with additive state, where the state is known both to the encoder and the decoder.
The i.i.d. Gaussian coding scheme which achieves capacity, is henceforth referred to as i.i.d. Gaussian user strategy in the SI game.
%
\begin{lemma}\label{lem:si:game:saddle:point}
The Side Information game has a Nash equilibrium saddle point. The i.i.d. Gaussian user strategy and the white Gaussian jamming independent of state form a Nash equilibrium. 
\end{lemma}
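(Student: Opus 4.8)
The plan is to verify the two saddle-point inequalities directly for the proposed pair, so that neither player gains by a unilateral deviation. Let $X^{*n}$ denote the i.i.d.\ $\mathcal{N}(0,P_U)$ user input drawn independently of $S^n$, and let $J^{*n}$ denote the i.i.d.\ $\mathcal{N}(0,\bar{P_J})$ jamming signal drawn independently of $(S^n,X^{*n})$. Since the decoder knows $S^n$, one has $I(X^n;Y^n|S^n)=I(X^n;X^n+J^n+Z^n\mid S^n)$, and for the pair $(X^{*n},J^{*n})$ this is the mutual information of the AWGN channel $\tilde Y^n=X^n+\tilde Z^n$ with $\tilde Z^n=J^{*n}+Z^n$ i.i.d.\ $\mathcal{N}(0,\bar{P_J}+\sigma^2)$; hence the game utility at $(X^{*n},J^{*n})$ is $V:=\tfrac n2\log\!\big(1+\tfrac{P_U}{\bar{P_J}+\sigma^2}\big)$. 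It therefore suffices to prove: (i) $I(X^n;Y^n|S^n)\le V$ for every user strategy when the jammer plays $J^{*n}$; and (ii) $I(X^{*n};Y^n|S^n)\ge V$ for every jamming strategy when the user plays $X^{*n}$.

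Claim (i) is the routine direction. When the jammer plays $J^{*n}$, the effective noise $J^{*n}+Z^n$ is i.i.d.\ Gaussian with variance $\bar{P_J}+\sigma^2$ and independent of $(X^n,S^n)$, so the channel is exactly the Gaussian state-at-both-ends channel of Lemma~\ref{lem:capacity:si:gaussian} with $\sigma^2$ replaced by $\bar{P_J}+\sigma^2$. The standard single-letter converse ($I(X^n;Y^n|S^n)\le\sum_i I(X_i;Y_i|S_i)$ by memorylessness, the scalar Gaussian maximum-entropy bound on each term, and concavity of $\log$ together with $\tfrac1n\sum_i \mathbb{E}[X_i^2]\le P_U$) then gives $I(X^n;Y^n|S^n)\le V$.

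Claim (ii) is the crux, because the jammer's signal $J^n$ may be an arbitrary, possibly randomized, function of $S^n$. Put $W^n:=J^n+Z^n$. Since $X^{*n}$ is independent of $(S^n,J^n,Z^n)$, subtracting the known state gives
\[ I(X^{*n};Y^n|S^n)=h(X^{*n}+W^n\mid S^n)-h(W^n\mid S^n). \]
Conditioned on $S^n$ the variables $X^{*n}$ and $W^n$ are independent, so by the conditional entropy power inequality (equivalently, the EPI applied for each value of $S^n$ together with convexity of $u\mapsto\log(c+2^{2u/n})$) and using $h(X^{*n}\mid S^n)=h(X^{*n})=\tfrac n2\log(2\pi e P_U)$,
\[ 2^{\frac2n h(X^{*n}+W^n\mid S^n)}\ \ge\ 2\pi e P_U+2^{\frac2n h(W^n\mid S^n)}. \]
Writing $t:=2^{\frac2n h(W^n\mid S^n)}$, this yields $I(X^{*n};Y^n|S^n)\ge\tfrac n2\log\!\big(1+2\pi e P_U/t\big)$, which is decreasing in $t$. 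Finally $h(W^n\mid S^n)\le h(W^n)\le\tfrac n2\log\!\big(2\pi e(\bar{P_J}+\sigma^2)\big)$: the first inequality holds because conditioning cannot increase entropy, and the second is the maximum-entropy bound, valid since $J^n\perp Z^n$ makes the cross term vanish so that $\tfrac1n\mathbb{E}\|W^n\|^2\le\bar{P_J}+\sigma^2$. Hence $t\le 2\pi e(\bar{P_J}+\sigma^2)$ and $I(X^{*n};Y^n|S^n)\ge V$.

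Combining (i) and (ii) shows that $(X^{*n},J^{*n})$ is a Nash equilibrium of the SI game with value $V$, proving the lemma. The main obstacle is claim (ii): one must rule out every state-dependent jamming signal, including one matched to the very realization of $S^n$ the decoder will subtract, and the conditional EPI is the right tool precisely because it controls $h(X^{*n}+W^n\mid S^n)$ through $h(W^n\mid S^n)$ alone, which is in turn bounded by the jammer's power budget regardless of how $J^n$ depends on $S^n$. A secondary, routine point is reconciling the i.i.d.\ Gaussian strategies with the stated block power constraints, handled by reading the constraints in the average (expected) sense or via a standard truncation argument.
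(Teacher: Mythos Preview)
Your proof is correct, and its second half is a genuinely different---and shorter---argument than the paper's.

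For the direction ``fix the jammer at i.i.d.\ Gaussian and bound the user from above'' (your claim~(i)), you and the paper do essentially the same thing: reduce to the state-at-both-ends Gaussian channel of Lemma~\ref{lem:capacity:si:gaussian} with noise variance $\bar{P_J}+\sigma^2$, then invoke the standard single-letter converse.

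For the direction ``fix the user at i.i.d.\ Gaussian and bound the jammer from below'' (your claim~(ii)), the approaches diverge. The paper proceeds structurally: it writes $h(X^n|Y^n,S^n)$ as an LLSE-error entropy, argues via covariance matching that a jointly Gaussian (linear-in-state) jammer attains the same error covariance as any feasible jammer while meeting the Gaussian upper bound with equality, then single-letterizes to memoryless linear jamming $J_i=\beta S_i+R_i$, and finally optimizes $(\beta,\sigma_R^2)$ to conclude $\beta=0$. You instead apply the conditional entropy power inequality in one stroke: since $X^{*n}\perp(S^n,W^n)$, the EPI applied for each $S^n=s^n$, combined with the convexity of $u\mapsto\log(c+2^{2u/n})$ (so Jensen goes the right way), gives $I(X^{*n};Y^n|S^n)\ge \tfrac n2\log(1+2\pi eP_U/t)$ with $t=2^{\frac2n h(W^n|S^n)}$, and a single max-entropy/power bound $t\le 2\pi e(\bar{P_J}+\sigma^2)$ closes the argument uniformly over all state-dependent jammers.

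What each buys: your EPI route is considerably more concise and avoids the sequence of covariance-matching claims. The paper's route, while longer, explicitly identifies the optimal jammer as linear-Gaussian in state and then shows the linear coefficient vanishes, thereby exposing the structure ``the jammer disregards the state'' as a conclusion of the optimization rather than a by-product of a single inequality.
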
 
%
Before giving the proof details, we briefly discuss this result and state a corollary.
Lemma~\ref{lem:si:game:saddle:point} implies that when the user employs an i.i.d. Gaussian strategy, the best jamming strategy is memoryless linear jamming and conversely, when the jammer employs linear jamming, the best user strategy is i.i.d. Gaussian. Note that at equilibrium the jammer disregards the state knowledge completely whilst choosing its jamming signal. Using Definition~\ref{def:capacity} we have the following corollary.
\begin{corollary}\label{cor:capacity:si:game}
The capacity of an AWGN channel with an additive jammer and an additive white Gaussian state, where the state is known at both at the encoder and the decoder as well as the jammer, is 
\begin{equation}\label{eq:capacity:SI:jammer}
C_{SI-J}=\frac{1}{2} \log \left(1+\frac{P_U}{P_J+\sigma^2}\right).
\end{equation}
\end{corollary}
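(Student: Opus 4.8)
The plan is to obtain Corollary~\ref{cor:capacity:si:game} as a direct evaluation of the game value at the equilibrium furnished by Lemma~\ref{lem:si:game:saddle:point}. First I would invoke Lemma~\ref{lem:si:game:saddle:point} to restrict attention to the equilibrium strategy pair: the user transmits an i.i.d.\ $\mathcal{N}(0,P_U)$ codeword independent of $S^n$, and the jammer transmits an i.i.d.\ $\mathcal{N}(0,P_J)$ signal independent of $(X^n,S^n)$. Since the jamming signal $J^n$ and the channel noise $Z^n$ are both i.i.d.\ and independent of the user input and of each other, I would merge them into a single effective disturbance $\tilde Z^n = J^n + Z^n$, which is i.i.d.\ $\mathcal{N}(0,P_J+\sigma^2)$; the received signal at equilibrium is then $Y^n = X^n + S^n + \tilde Z^n$, which is exactly the AWGN-with-state channel of Lemma~\ref{lem:capacity:si:gaussian} with noise variance $P_J+\sigma^2$ in place of $\sigma^2$.

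Next I would evaluate the SI-game utility $I(X^n;Y^n\mid S^n)$ at this strategy pair. Conditioning on $S^n$ cancels the additive state, and since $X^n$ is i.i.d.\ Gaussian and independent of $(S^n,\tilde Z^n)$ the mutual information factorizes across the $n$ uses into copies of $I(X_1;X_1+\tilde Z_1) = \tfrac12\log\!\bigl(1+P_U/(P_J+\sigma^2)\bigr)$ per channel use. The i.i.d.\ $\mathcal{N}(0,P_U)$ input satisfies the user power constraint, so Lemma~\ref{lem:capacity:si:gaussian}, applied with the substituted noise variance, confirms that this is the value both players end up with.

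Finally, since the SI game is zero sum, all of its Nash equilibria have a common utility value~\cite{owen}; Definition~\ref{def:capacity} then identifies this common value with the capacity $C_{SI-J}$ of the SI channel in the presence of the jammer, which is~\eqref{eq:capacity:SI:jammer}. The substantive obstacle is not in this corollary but one level up, in Lemma~\ref{lem:si:game:saddle:point}: one must verify that neither player can profitably deviate from the Gaussian/linear pair --- in particular that against an i.i.d.\ Gaussian user no state-dependent jamming strategy beats state-independent white Gaussian jamming (a worst-additive-noise/entropy-power argument), and that i.i.d.\ Gaussian input is a best response to white Gaussian jamming (immediate from Lemma~\ref{lem:capacity:si:gaussian}). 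Granting that lemma, the corollary reduces to the bookkeeping above.
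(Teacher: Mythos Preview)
Your proposal is correct and matches the paper's approach: the paper likewise obtains the corollary directly from Lemma~\ref{lem:si:game:saddle:point} and Definition~\ref{def:capacity}, with the equilibrium value computed (as you do) by absorbing the i.i.d.\ Gaussian jamming into the noise and invoking the standard AWGN-with-state capacity formula. The paper carries out that evaluation at the end of the proof of Lemma~\ref{lem:si:game:saddle:point} rather than as a separate argument, but the content is identical.
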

\begin{proof}[Proof of Lemma~\ref{lem:si:game:saddle:point}]
To prove this result, we first assume that the user signaling is i.i.d Gaussian and show that the best jamming strategy is  memoryless and linear in state. Then, we assume the jamming signal to be memoryless and linear in state and prove that the best user signaling is i.i.d Gaussian signaling. With the strategies so determined,  
it will then follow that the above pair of user and jammer strategies are the equilibrium saddle point strategies.
 
So let the user strategy be i.i.d Gaussian signaling. Recollect from previous discussion that the jammer aims to minimize $I(X^n;Y^n|S^n)$. We know that
\begin{equation}\label{eq:mutual:inf:1}
I(X^n;Y^n|S^n)=h(X^n|S^n)-h(X^n|Y^n,S^n).
\end{equation}
The jammer can only affect the second term of the RHS of~\eqref{eq:mutual:inf:1}. Thus, it will choose its signal $J^n$ so as to maximize $h(X^n|Y^n,S^n)$. Here, we have
\begin{equation}\label{eq:mutual:inf:2}
h(X^n|Y^n,S^n)=h(X^n-AY^n-BS^n|Y^n,S^n)
\end{equation}
for any constant matrices $A,B\in \mathbb{R}^{n\times n}$. Since entropy decreases upon conditioning, we have 
\begin{equation}\label{eq:mutual:inf:3}
h(X^n-AY^n-BS^n|Y^n,S^n)\leq h(X^n-AY^n-BS^n).
\end{equation}
Here, let $(AY^n-BS^n)$ be the LLSE estimate of $X^n$ from the channel output $Y^n$ and the state $S^n$ and hence, $(X^n-AY^n-BS^n)$ be the LLSE estimation error. Let the variance of this LLSE estimation error be $\Lambda_e$. 
Noting that the entropy is maximized by a Gaussian distribution for a fixed covariance matrix~\cite{gallager} we have
\begin{equation}\label{eq:mutual:inf:4}
h(X^n-AY^n-BS^n)\leq \frac{1}{2}\log( (2\pi e)^n |\Lambda_{err}| )
\end{equation} 
where, $|\Lambda_{err}|$ is the determinant of $\Lambda_{err}$.
Hence, from~\eqref{eq:mutual:inf:2},~\eqref{eq:mutual:inf:3} and~\eqref{eq:mutual:inf:4}, it is seen that
\begin{equation*}\label{eq:mutual:inf:1234}
h(X^n|Y^n,S^n)\leq \frac{1}{2}\log( (2\pi e)^n |\Lambda_{err}| )
\end{equation*} 
We now show that the best jamming strategy is Gaussian and jointly Gaussian with state $S^n$. For ease of notation given any signal $V^n=(V_1,V_2,\dots,V_n)$, we use the vector representation $\vec{V}=(V_1,V_2,\dots,V_n)^T$ whenever necessary. 
Before proving this result, we show that the best jammer strategy will be zero mean.
\begin{claim}
The best jamming strategy is a zero mean jamming strategy.
\end{claim}

{\em Proof:} 
Let $J^n$ be a feasible jamming strategy, where $\mathbb{E}[\vec{J}]=\boldsymbol{\mu}$. Let the resulting mutual information be $I(X^n;Y^n|S^n)$. Now let us define another jamming strategy $\tilde{J}^n$ such that $\vec{\tilde{J}}=\vec{J}-\boldsymbol{\mu}$, and where $\tilde{Y}^n$ is the corresponding channel output. Note that if $J^n$ is a feasible jamming signal then so is $\tilde{J}^n$ since $\mathbb{E}[\vec{\tilde{J}}^T\vec{\tilde{J}}]\leq \mathbb{E}[\vec{J}^T\vec{J}]\leq P_J$. Also note that $\mathbb{E}[\vec{\tilde{J}}]=\vec{0}$, where $\vec{0}$ is the zero vector. Here we have,
\begin{subequations}
\begin{IEEEeqnarray*}{rCl}
I(X^n;Y^n|S^n)&=&h(Y^n|S^n)-h(Y^n|X^n,S^n)\\
              &\stackrel{(a)}=&h(Y^n-\boldsymbol{\mu})-h(Y^n-\boldsymbol{\mu}|X^n,S^n)\\
							&=&h(\tilde{Y}^n)-h(\tilde{Y}^n|X^n,S^n)\\
							&\stackrel{(b)}=&I(X^n;\tilde{Y}^n|S^n).
\end{IEEEeqnarray*}
\end{subequations}
\addtocounter{equation}{-1}
Here, $(a)$ follows by noting that entropy is invariant under translation~\cite{gallager}. $(b)$ follows from the definition of mutual information.
Thus, for every feasible jamming strategy there exits an equivalent zero-mean jamming strategy which results in the same mutual information value.~\IEEEQEDopen

Thus, from here on all the jamming strategies considered are zero mean strategies. 
\begin{claim}
A linear jamming strategy, jointly Gaussian with state, is the best jamming strategy.
\end{claim}
{\em Proof:}
Let $J^n$ be the best choice for the jammer given an i.i.d Gaussian user strategy which minimizes $h(X^n|Y^n,S^n)$ and let $\Lambda_{err}$ be the corresponding error covariance matrix. Since the error is the LLSE estimator error, $\Lambda_{err}$ is given as~\cite{gallager}
\begin{equation*}\label{eq:LLSE:error:covariance}
\Lambda_{err}=\Lambda_{\vec{X}\vec{X}}-\Lambda_{\vec{X},\vec{Y}}\Lambda_{\vec{Y},\vec{Y}}^{-1} \Lambda_{\vec{X},\vec{Y}}^T.
\end{equation*}
Here, $\Lambda_{\vec{X}\vec{X}}$ and $\Lambda_{\vec{Y}\vec{Y}}$ are the autocovariance matrices of $\vec{X}$, $\vec{Y}$ respectively and $\Lambda_{\vec{X}\vec{Y}}$ the covariance matrix of $\vec{X}$ and $\vec{Y}$ .
Let us define a random variable $\vec{R}=R^n$ as 
\begin{equation*} 	
\vec{R}= \vec{J}-\frac{\mathbb{E}[\vec{J}^T\vec{S}]}{P_S} \vec{S}.
\end{equation*}
Observe that $\vec{R}$ is uncorrelated to $\vec{S}$ since $\mathbb{E}[\vec{R}^T\vec{S}]=\vec{0}=\mathbb{E}[\vec{R}^T]\mathbb{E}[\vec{S}]$ and $\mathbb{E}[\vec{S}]=\vec{0}$. Here, $\vec{0}$ is the $n$-length zero vector. 
Also note that $\vec{R}$ is uncorrelated to $\vec{X}$ since
\begin{subequations}\label{eq:X:uncorrelated:R}
\begin{IEEEeqnarray}{rCl}
\mathbb{E}[\vec{R}^T\vec{X}]&=& \mathbb{E}_{\vec{S}}[\mathbb{E} [\vec{R}^T\vec{X}|\vec{S}]]\\
                            &\stackrel{(a)}=& \mathbb{E}_{\vec{S}}[\mathbb{E} [\vec{R}^T|\vec{S}]\mathbb{E} [\vec{X}|\vec{S}]\\
														&\stackrel{(b)}=& \mathbb{E}_{\vec{S}}[\mathbb{E} [\vec{R}^T]\mathbb{E} [\vec{X}|\vec{S}]\\
                            &=& \mathbb{E} [\vec{R}^T]\mathbb{E}_{\vec{S}}[\mathbb{E} [\vec{X}|\vec{S}]\\
														&=& \mathbb{E} [\vec{R}^T]\mathbb{E}[\vec{X}]
\end{IEEEeqnarray}
\end{subequations}
Here, $(a)$ follows since given $\vec{S}$, $\vec{X}$ and $\vec{R}$ are uncorrelated. $\vec{R}$ is uncorrelated to $\vec{S}$ and hence, we have $(b)$. The rest then follows in a straightforward manner.

Now, the average power of the jamming signal $\vec{J}=\frac{\mathbb{E}[\vec{J}^T\vec{S}]}{P_S} \vec{S}+\vec{R}$ is
\begin{subequations}
\begin{IEEEeqnarray*}{rCl} 	
\mathbb{E}[\vec{J}^T\vec{J}]&=& \mathbb{E}[\vec{R}^T\vec{R}]+\frac{\mathbb{E}[\vec{S}^T\vec{S}]}{{P_S}^2} \mathbb{E}^2[\vec{J}^T\vec{S}]\\
&=&\mathbb{E}[\vec{R}^T\vec{R}]+\frac{\mathbb{E}^2[\vec{J}^T\vec{S}]}{{P_S}} 
\end{IEEEeqnarray*}
\end{subequations}
\addtocounter{equation}{-1}
Here, we have used the facts that $\vec{R}$ is uncorrelated to $\vec{S}$, $\mathbb{E}[\vec{S}]=\vec{0}$ and $\mathbb{E}[\vec{S}^T\vec{S}]=P_S$. Also, recall that feasibility requires that  $\mathbb{E}[\vec{J}^T\vec{J}]\leq P_J$. 

We now define a linear jammer whose jamming signal $\vec{J'}=J'^n$ is given as
\begin{equation*}\label{eq:linear:jammer} 	
\vec{J'}=\frac{\mathbb{E}[\vec{J}^T\vec{S}]}{P_S} \vec{S}+\vec{R'}.
\end{equation*}
Here, $\vec{R'}=(R'_1,R'_2,\dots,R'_n)$ and $\vec{R'}\sim \mathcal{N}\left(\mathbb{E}[\vec{R}], K_{\vec{R}\vec{R}} \right)$. Since $\mathbb{E}[\vec{J}]=0$ and $\mathbb{E}[\vec{S}]=0$, it follows that $\mathbb{E}[\vec{R}]=0$. Also, $R'^n$ is independent of state $S^n$.  
It can be directly seen that $\vec{J}'$ and $\vec{J}$ are such that
\begin{equation*}
\mathbb{E}[\vec{J}'^T\vec{J}']=\mathbb{E}[\vec{J}^T\vec{J}]
\end{equation*}
Thus, the linear jammer with jamming signal $J'^n$ has the same average power as that of the best jammer $J^n$ and hence, is a feasible jamming signal.

Now given the communication setup and a fixed input strategy, it is seen that if 
\begin{subequations}\label{eq:XY:covariance:equality}
\begin{IEEEeqnarray}{rCl}
K_{\vec{X}\vec{Y'}}=K_{\vec{X}\vec{Y}}\\
K_{\vec{Y}'\vec{Y}'}=K_{\vec{Y}\vec{Y}}
\end{IEEEeqnarray}
\end{subequations}
where $\vec{Y}$ and $\vec{Y'}$ are the channel outputs when the jammer strategy is $\vec{J}$ and $\vec{J}'$ respectively, then $\Lambda_{{err}'}=\Lambda_{err}$. Here $\Lambda_{{err}'}$ is the error covariance matrix when the jammer is linear and is given as
\begin{equation*}\label{eq:LLSE:error':covariance}
\Lambda_{{err}'}=\Lambda_{\vec{X}\vec{X}}-\Lambda_{\vec{X},\vec{Y}'}\Lambda_{\vec{Y}',\vec{Y}'}^{-1} \Lambda_{\vec{X},\vec{Y}'}^T.
\end{equation*}
$\Lambda_{\vec{Y}'\vec{Y}'}$ is the autocovariance matrix of the channel output $\vec{Y}'$ and $\Lambda_{\vec{X}\vec{Y}'}$ is the covariance matrix of $\vec{X}$ and $\vec{Y}'$. Noting that $\mathbb{E}[\vec{J}]=\vec{0}=\mathbb{E}[\vec{J'}]$, it is seen that~\eqref{eq:XY:covariance:equality} follows if 
\begin{subequations}\label{eq:XJ:covariance:equality}
\begin{IEEEeqnarray*}{rCl}
\mathbb{E}[\vec{X}\vec{J'}^T]=\mathbb{E}[\vec{X}\vec{J}^T]\\
\mathbb{E}[\vec{J}'\vec{J'}^T]=\mathbb{E}[\vec{J}\vec{J}^T].
\end{IEEEeqnarray*}
\end{subequations}
\addtocounter{equation}{-1}
Now,
\begin{subequations}\label{eq:XJ:XJ'}
\begin{IEEEeqnarray}{rCl}
\mathbb{E}[\vec{X}\vec{J}^T]&=&\mathbb{E}\left[\vec{X}\left(\frac{\mathbb{E}[\vec{J}^T\vec{S}]}{P_S}\vec{S}+\vec{R}\right)^T\right]\nonumber\\
&=&\frac{\mathbb{E}[\vec{J}^T\vec{S}]}{P_S} \mathbb{E}[\vec{X}\vec{S}^T]+\mathbb{E}[\vec{X}\vec{R}^T]\nonumber\\
&\stackrel{(a)}=&\frac{\mathbb{E}[\vec{J}^T\vec{S}]}{P_S} \mathbb{E}[\vec{X}\vec{S}^T]+\mathbb{E}[\vec{X}]\mathbb{E}[\vec{R}^T]\\
&\stackrel{(b)}=&\frac{\mathbb{E}[\vec{J}^T\vec{S}]}{P_S} \mathbb{E}[\vec{X}\vec{S}^T]+\mathbb{E}[\vec{X}]\mathbb{E}[\vec{R}'^T]\\
&=&\mathbb{E}[\vec{X}\vec{J}'^T]\nonumber
\end{IEEEeqnarray}
\end{subequations}
\addtocounter{equation}{-1}	
Here, $(a)$ follows from~\eqref{eq:X:uncorrelated:R}, and $(b)$ from noting that $\mathbb{E}[\vec{R}]=\mathbb{E}[\vec{R}']$.
Similarly, we have
%
\begin{subequations}\label{eq:JJ:equal:J'J'}
\begin{IEEEeqnarray}{rCl}
\mathbb{E}[\vec{J}\vec{J}^T]&=&\mathbb{E}\left[\left(\frac{\mathbb{E}[\vec{J}^T\vec{S}]}{P_S}\vec{S}+\vec{R}\right)\left(\frac{\mathbb{E}[\vec{J}^T\vec{S}]}{P_S}\vec{S}+\vec{R}\right)^T\right]\nonumber\\
&=&\frac{\mathbb{E}^2[\vec{J}^T\vec{S}]}{P^2_S}\mathbb{E}[\vec{S}\vec{S}^T]+\frac{\mathbb{E}[\vec{J}^T\vec{S}]}{P_S} \left(\mathbb{E}[\vec{R}\vec{S}^T]+\mathbb{E}[\vec{S}\vec{R}^T]\right)+\mathbb{E}[\vec{R}\vec{R}^T] \nonumber\\
&\stackrel{(a)}=&\frac{\mathbb{E}^2[\vec{J}^T\vec{S}]}{P_S}\mathbb{E}[\vec{S}\vec{S}^T]+\frac{\mathbb{E}[\vec{J}^T\vec{S}]}{P_S}\left(\mathbb{E}[\vec{R}]\mathbb{E}[\vec{S}^T]+\mathbb{E}[\vec{S}]\mathbb{E}[\vec{R}^T]\right)+\mathbb{E}[\vec{R}\vec{R}^T] \\
&\stackrel{(b)}=&\frac{\mathbb{E}^2[\vec{J}^T\vec{S}]}{P_S}\mathbb{E}[\vec{S}\vec{S}^T]+\frac{\mathbb{E}[\vec{J}^T\vec{S}]}{P_S}\left(\mathbb{E}[\vec{R}']\mathbb{E}[\vec{S}^T]+\mathbb{E}[\vec{S}]\mathbb{E}[\vec{R}'^T]\right)+\mathbb{E}[\vec{R}'\vec{R}'^T] \\
&=&\mathbb{E}\left[\left(\frac{\mathbb{E}[\vec{J}^T\vec{S}']}{P_S}\vec{S}+\vec{R}'\right)\left(\frac{\mathbb{E}[\vec{J}^T\vec{S}']}{P_S}\vec{S}+\vec{R}'\right)^T\right]\nonumber\\
&=&\mathbb{E}[\vec{J}'\vec{J}'^T]\nonumber
\end{IEEEeqnarray}
\end{subequations}
%
Here, we see that $(a)$ is a result of $\vec{R}$ being uncorrelated to $\vec{S}$. In the next step, $(b)$ is true since $\vec{R}$ and $\vec{R}'$ have the same mean and covariance matrix. We thus have, $\Lambda_{{err}'}=\Lambda_{err}$. 

Finally, since $J'^n$ is a Gaussian signal which is jointly Gaussian with $S^n$, the resulting LLSE estimator is the MMSE estimator and the corresponding error is a Gaussian vector. Thus, we have the following

%
\begin{IEEEeqnarray*}{rCl}\label{eq:J:J':conditional:entropy}
h(X^n-AY^n-BS^n|Y^n,S^n)&\leq& \frac{1}{2}\log \left( (2\pi e)^n|\Lambda_{{err}}|\right) \\
                &= & \frac{1}{2}\log \left( (2\pi e)^n|\Lambda_{{err}'}|\right) \\    
                &= & h(X^n-AY'^n-BS^n)\\     
								&= & h(X^n-AY'^n-BS^n|Y'^n,S^n) \\
								&= & h(X^n|Y'^n,S^n)\yesnumber
\end{IEEEeqnarray*}
%
Hence, from~\eqref{eq:mutual:inf:1} and~\eqref{eq:J:J':conditional:entropy} we have 
\begin{equation*}
I(X^n;Y^n|S^n)\geq I(X^n;Y'^n|S^n).
\end{equation*}
This implies that the jointly Gaussian linear jamming is the best jamming strategy.~\IEEEQEDopen

We now determine the nature of the this jamming signal. From now on, let $J^n$ be the best Gaussian and jointly Gaussian (with $S^n$) strategy. The feasibility requires that $\mathbb{E}[\vec{J}^T\vec{J}]\leq P_J$. 
\begin{claim}
The i.i.d Gaussian jamming strategy, linear in state, is the best jamming strategy.
\end{claim}
%
{\em Proof:}
Consider $h(X^n|Y^n,S^n)$ again. From the entropy chain rule~\cite{gallager}, we know that 
\begin{equation}\label{eq:mutual:inf:5:1}
h(X^n|Y^n,S^n)= \sum_{i=1}^n h(X_i|Y^n,S^n,X_1^{i-1})
\end{equation}
where, $X_{1}^{i-1}=(X_1,X_2,\dots,X_{i-1})$. Now since the user strategy is i.i.d Gaussian, we have
\begin{equation}\label{eq:mutual:inf:5:2}
\sum_{i=1}^n h(X_i|Y^n,S^n,X_1^{i-1})= \sum_{i=1}^n h(X_i|Y^n,S^n).
\end{equation}
Noting that conditioning reduces entropy it follows that
\begin{equation}\label{eq:mutual:inf:5:3}
\sum_{i=1}^n h(X_i|Y^n,S^n)\leq \sum_{i=1}^n h(X_i|Y_i,S_i).
\end{equation}
Thus, from~\eqref{eq:mutual:inf:5:1},~\eqref{eq:mutual:inf:5:2} and~\eqref{eq:mutual:inf:5:3}  we have
\begin{equation}\label{eq:mutual:inf:5:123}
h(X^n|Y^n,S^n)\leq \sum_{i=1}^n h(X_i|Y_i,S_i).
\end{equation}
Now the mutual information $I(X^n;Y^n|S^n)$, when the jamming strategy $J^n$ is jointly Gaussian and linear in $S^n$, is
\begin{subequations}\label{eq:mutual:inf:6}
\begin{IEEEeqnarray}{rCl}
I(X^n;Y^n|S^n)&=&h(X^n)-h(X^n|Y^n,S^n)\nonumber\\
              &\stackrel{(a)}=& \left(\sum_{i=1}^n h(X_i)\right) -h(X^n|Y^n,S^n)\\
							&\stackrel{(b)}\geq& \sum_{i=1}^n \left(h(X_i) -h(X_i|Y_i,S_i)\right)\\
              &=& \sum_{i=1}^n I(X_i;Y_i|S_i)\nonumber\\
							&\stackrel{(c)}=& \sum_{i=1}^n I(X_i;Y'_i|S_i)\\
							&=&  I(X^n;Y'^n|S^n)\nonumber
\end{IEEEeqnarray}
\end{subequations}
Here, $(a)$ follows from the fact that the user strategy is i.i.d Gaussian. $(b)$ is a direct consequence of~\eqref{eq:mutual:inf:5:123}. Now, for $(c)$ we have the following. We know that $(X^n,S^n,Z^n,J^n,R^n,Y^n)$ is jointly Gaussian. Let us define $F^n=(X^n,S^n,Z^n,J^n,R^n,Y^n)$ where, $F_i=(X_i,S_i,Z_i,J_i,R_i,Y_i)$. Since $\vec{R}$ is not a memoryless vector, it follows that $\vec{J}$ and $\vec{Y}$ are not memoryless vectors. This also results in $\vec{F}$ not being memoryless. Now let us define a Gaussian vector $\vec{R}'=(R'_1,R'_2,\dots,R'_n)$, where $R'_i\sim \mathcal{N}(0,\sigma_{R_i}^2)$, $i=1,2,\dots,n$. Next, let us define a new linear jamming strategy $J'^n=(J'_1,J'_2,\dots,J'_n)$, $i=1,2,\dots,n$, where $J'_i=\beta S_i+R'_i$, $i=1,2,\dots,n$. Note that $\vec{J}'$ is memoryless. Now since $\vec{R}'$ and $\vec{J}'$ are memoryless, it follows that $\vec{Y}'=(Y_1,Y_2,\dots, Y_n)$, where $Y_i=X_i+S_i+Z_i+J_i$ and $\vec{F}'=(F'_1,F'_2,\dots, F'_n)$, where $F'_i=(X_i,S_i,Z_i,J'_i,R'_i,Y'_i)$ are also memoryless vectors. Observe that $J'^n$ is a feasible jamming strategy as $\mathbb{E}[\vec{J}'^T\vec{J}']=\mathbb{E}[\vec{J}^T\vec{J}]\leq P_J$. Thus, by choosing the jamming strategy in the above manner we have $I(X_i;Y_i|S_i)=I(X_i;Y'_i|S_i)$, and hence, equality in $(c)$. 

Finally, the $\vec{R}'$ can be made i.i.d as the logarithm function in $I(X;Y|S)$ is a concave function. Hence, the best linear strategy is given as $J'_i=\beta_i S_i+R'_i$, $i=1,2,\dots,n$ where $\beta$ is a constant and $R'_i\sim\mathcal{N}(0,\sigma_{R'}^2)$. Note that the average power constraint dictates that $\beta^2\sigma_S^2+\sigma_R'^2\leq P_J$.~\IEEEQEDopen
%

Before we determine the optimal values of $(\beta,\sigma_R^2)$, we determine the best user strategy when the jamming strategy is i.i.d Gaussian and linear in state. 
\begin{claim}
If the jammer strategy is linear in state and i.i.d, the user employs i.i.d Gaussian signaling.
\end{claim}
%
{\em Proof:}
Let us assume that the jammer chooses an i.i.d signal $J^n=(J_1,J_2,\dots,J_n)$, where $J_i=\beta S_i+R_i$.
Then, we have the scenario depicted in Fig.~\ref{fig:si:minimax:equivalent}. 
\begin{figure}[!ht]
  \begin{center}
    \includegraphics[trim=0cm 3cm 0cm 0cm,scale=0.3]{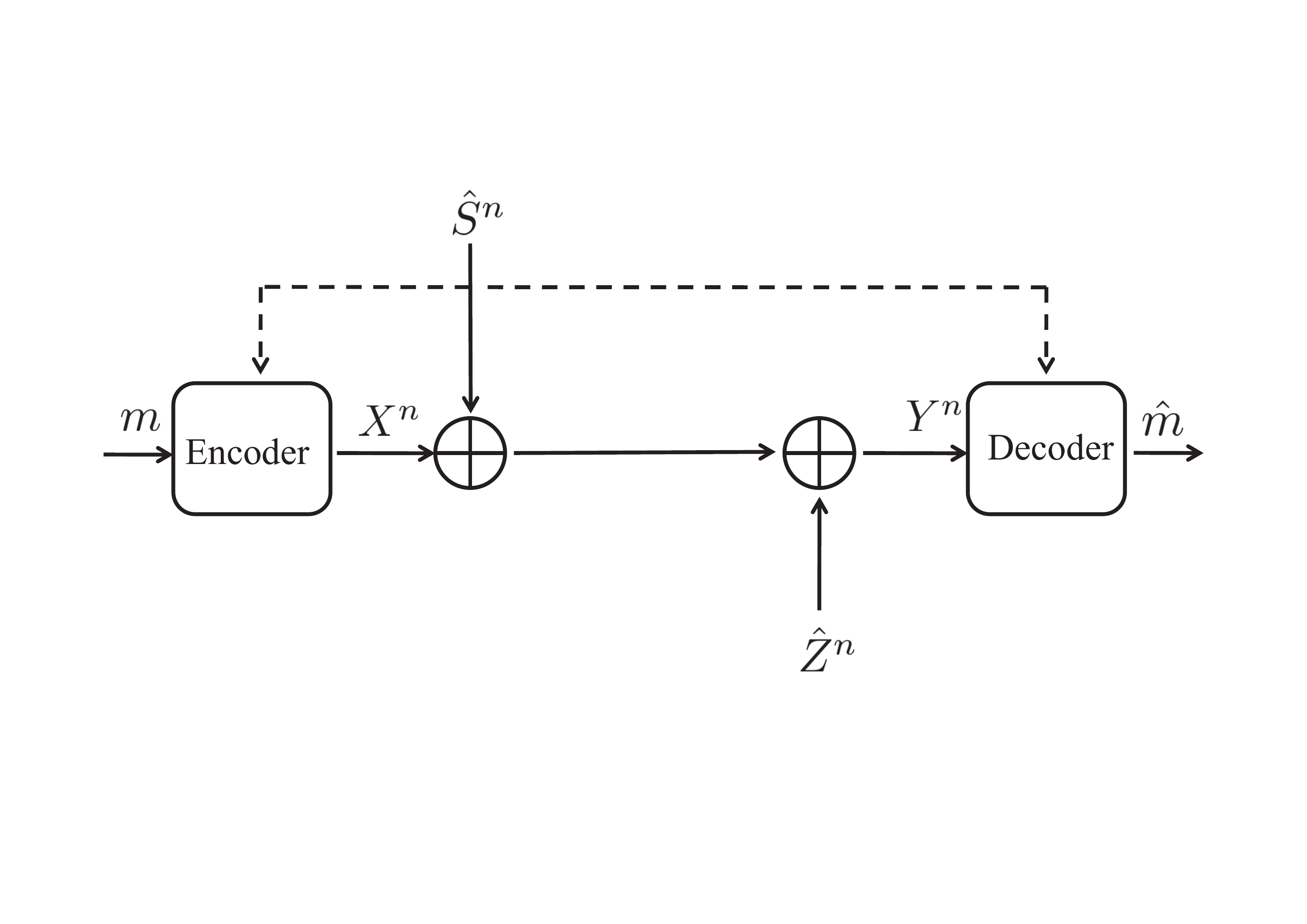}
    \caption{The SI Game Equivalent Setup When Jamming is Memoryless and Linear in State}
    \label{fig:si:minimax:equivalent}
  \end{center}
\end{figure}
Here, $\hat{S}^n=(\hat{S}_1,\hat{S}_2,\dots,\hat{S}_n)$ is the effective white Gaussian state of the channel and $\hat{Z}^n=(\hat{Z}_1,\hat{Z}_2,\dots,\hat{Z}_n)$ is the effective independent noise in the channel. Note that $\hat{S}_i=(1+\beta)S_i$ and $\hat{Z}_i=Z_i+R_i$ for $i=1,2,\dots,n$. This is an instance of a channel with state $\hat{S}^n$, where $\hat{S}^n$ is known to the encoder and the decoder. From~\cite{elgamal-kim} and Lemma~\ref{lem:si:game:saddle:point}, we know that the capacity achieving user signaling for this channel is i.i.d Gaussian signaling.~\IEEEQEDopen

We have thus determined the saddle point user and jammer strategies. Given the the saddle point jamming strategy $J^n=(J_1,J_2,\dots,J_n)$, where $J_i=\beta S_i+R_i$ the jammer chooses $(\beta,\sigma_R^2)$ so as to induce the worst channel for the user. However, the jammer is average power constrained and hence, $\beta^2\sigma_S^2+\sigma_R^2\leq P_J$. We now state  the optimal value of $(\beta,\sigma_R^2)$.
\begin{claim}
The saddle point jamming strategy is white Gaussian noise jamming.
\end{claim}
%
{\em Proof:}
Since the jammer behaviour at equilibrium is memoryless linear (in state) Gaussian jamming, note that from the jammer's perspective any component dependent on the state will be known at the encoder-decoder in the SI game which  can be completely canceled by the decoder. Thus, to maximize its adversarial impact on the mutual information, the jammer will choose $J^n$ to be completely independent of $S^n$ and hence, $\beta=0$ and $\sigma_R^2=P_J$. The jammer thus effectively acts as an independent white Gaussian noise signal.~\IEEEQEDopen
%

It follows that the equilibrium saddle point value, and thence the capacity of the channel, is given as in~\eqref{eq:capacity:SI:jammer}. This completes the proof of Lemma~\ref{lem:si:game:saddle:point}.
\end{proof}
We have seen that at equilibrium the jammer, even in the presence of state knowledge, disregards it completely whilst choosing its jamming signal. 
\subsection{The Costa Game}
%
We first show that the Costa game utility is upper bounded by the SI game utility. Recall that the player utility in the Costa game is $I(U^n;Y^n)-I(U^n;S^s)$, where $U^n$ is an appropriate auxiliary variable, while the SI game utility is $I(X^n;Y^n|S^n)$. We then have the following result.
\begin{lemma}
$I(U^n;Y^n)-I(U^n;S^n) \leq I(X^n;Y^n|S^n)$, where $U^n-(X^n,S^n)-Y^n$.
\end{lemma}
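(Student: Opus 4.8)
The plan is to derive the bound purely from chain-rule identities for mutual information together with the hypothesized Markov relation $U^n-(X^n,S^n)-Y^n$; everything is carried out at the block ($n$-letter) level, so no single-letterization is required.

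First I would rewrite the left-hand side. Expanding $I(U^n;Y^n,S^n)$ in the two natural orders gives
\[
I(U^n;S^n)+I(U^n;Y^n\mid S^n)=I(U^n;Y^n,S^n)=I(U^n;Y^n)+I(U^n;S^n\mid Y^n),
\]
hence
\[
I(U^n;Y^n)-I(U^n;S^n)=I(U^n;Y^n\mid S^n)-I(U^n;S^n\mid Y^n)\le I(U^n;Y^n\mid S^n),
\]
where the inequality simply discards the nonnegative term $I(U^n;S^n\mid Y^n)$. Next I would bring in the channel input $X^n$: adding a variable cannot decrease mutual information, so $I(U^n;Y^n\mid S^n)\le I(U^n,X^n;Y^n\mid S^n)$, and the chain rule gives $I(U^n,X^n;Y^n\mid S^n)=I(X^n;Y^n\mid S^n)+I(U^n;Y^n\mid X^n,S^n)$. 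The Markov chain $U^n-(X^n,S^n)-Y^n$ forces $I(U^n;Y^n\mid X^n,S^n)=0$, so $I(U^n;Y^n\mid S^n)\le I(X^n;Y^n\mid S^n)$, and combining with the previous display yields the claim.

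I do not expect a genuine obstacle here; the one point that needs care is the Markov structure $U^n-(X^n,S^n)-Y^n$, which indeed holds for the Costa-game joint law $p(s^n)p(u^n,x^n\mid s^n)p(j^n\mid s^n)p(y^n\mid x^n,j^n)$ because $J^n$ (and $Z^n$) are conditionally independent of $U^n$ given $(X^n,S^n)$. It is worth noting that the discarded term $I(U^n;S^n\mid Y^n)$ measures exactly the penalty the Costa (Gelfand--Pinsker) encoder incurs relative to the SI setting in which $S^n$ is also available at the decoder; the lemma says this penalty can only reduce the utility, which together with Lemma~\ref{lem:si:game:saddle:point} will let us pin down the Costa-game value.
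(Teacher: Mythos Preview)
Your proof is correct and follows essentially the same two-step route as the paper: first bound $I(U^n;Y^n)-I(U^n;S^n)\le I(U^n;Y^n\mid S^n)$, then use the Markov chain to get $I(U^n;Y^n\mid S^n)\le I(X^n;Y^n\mid S^n)$. The only cosmetic difference is that the paper writes the first step in entropy form, $h(U^n\mid S^n)-h(U^n\mid Y^n)\le h(U^n\mid S^n)-h(U^n\mid Y^n,S^n)$, whereas you obtain it via the chain-rule identity for $I(U^n;Y^n,S^n)$; these are equivalent, and your added remark identifying the slack $I(U^n;S^n\mid Y^n)$ as the Gelfand--Pinsker penalty is a nice touch.
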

 
\indent\textit{Proof:}
Using elementary entropy properties, we have
\begin{equation}\label{eq:costa:mutual:inf:1}
I(U^n;Y^n)-I(U^n;S^n)=h(U^n|S^n)-h(U^n|Y^n)
\end{equation}
Since entropy is decreased upon conditioning, it is seen that
\begin{eqnarray*}\label{eq:costa:mutual:inf:2}
h(U^n|S^n)-h(U^n|Y^n)&\leq& h(U^n|S^n)-h(U^n|Y^n,S^n)\\
                &=& I(U^n;Y^n|S^n)
\end{eqnarray*}
Thus, it follows that
\begin{equation}\label{eq:costa:mutual:inf:3}
I(U^n;Y^n)-I(U^n;S^n)\leq I(U^n;Y^n|S^n).
\end{equation}
From the Markov chain $U^n-(X^n,S^n)-Y^n$ it is seen that
\begin{equation}\label{eq:costa:mutual:inf:4}
I(U^n;Y^n|S^n)\leq I(X^n;Y^n|S^n).
\end{equation}
From~\eqref{eq:costa:mutual:inf:3} and~\eqref{eq:costa:mutual:inf:4}, we finally have  
\begin{equation*}\label{eq:costa:leq:si}
I(U^n;Y^n)-I(U^n;S^n)\leq I(X^n;Y^n|S^n). ~\QEDopen
\end{equation*}
%
%
Therefore, for any pair of user and jammer strategies the Costa game utility is upper bounded by the SI game utility. We will come back to this result after we prove that the Costa game has a Nash equilibrium saddle point. We now prove Theorem~\ref{thm:costa:game:saddle:point}.\\
\\
\indent \textit{Proof of Theorem~\ref{thm:costa:game:saddle:point}}\\
The proof entails, firstly, the verification of the saddle point strategies given in Theorem~\ref{thm:costa:game:saddle:point}. Then, it is shown that the utility at this saddle point is as given in~\eqref{eq:costa:game:saddle:point}. 
We begin with the following lemma.
\begin{lemma}\label{lem:u:DPC:j:iid}
When the user strategy is DPC strategy, the best jamming strategy is white Gaussian jamming independent of state.
\end{lemma}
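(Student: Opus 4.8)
The plan is to fix the DPC user strategy --- $X^n$ i.i.d.\ $\mathcal{N}(0,P_U)$ and independent of $S^n$, with $U_i=X_i+\alpha S_i$ and $\alpha=P_U/(P_U+P_J+\sigma^2)$ --- and to show that, over all feasible jamming strategies $p(j^n\mid s^n)$, the Costa utility $I(U^n;Y^n)-I(U^n;S^n)$ is minimized by the i.i.d.\ $\mathcal{N}(0,P_J)$ jammer that disregards the state. First I would rewrite the utility as $h(U^n\mid S^n)-h(U^n\mid Y^n)$ and note that, since $U^n=X^n+\alpha S^n$ with $X^n\perp S^n$, the first term equals $h(X^n)=\tfrac n2\log(2\pi e P_U)$ and cannot be affected by the jammer; hence the jammer's problem is exactly to \emph{maximize} $h(U^n\mid Y^n)$. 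As in the zero-mean Claim within the proof of Lemma~\ref{lem:si:game:saddle:point}, a translation argument lets us restrict to zero-mean jammers.

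Next I would single-letterize. By the entropy chain rule and ``conditioning reduces entropy,'' $h(U^n\mid Y^n)=\sum_{i=1}^n h(U_i\mid Y^n,U_1^{i-1})\le\sum_{i=1}^n h(U_i\mid Y_i)$, and the Gaussian maximum-entropy bound applied to the LLSE estimate of $U_i$ from $Y_i$ gives $h(U_i\mid Y_i)\le\tfrac12\log\!\big(2\pi e\,v(p_i,\rho_i)\big)$, where $p_i=\mathbb{E}[J_i^2]$, $\rho_i=\mathbb{E}[S_iJ_i]$, and
\[
v(p,\rho)=\big(P_U+\alpha^2\sigma_S^2\big)-\frac{\big(P_U+\alpha\sigma_S^2+\alpha\rho\big)^2}{P_U+\sigma_S^2+\sigma^2+p+2\rho}
\]
(here $\mathbb{E}[X_iJ_i]=0$ because $X^n\perp(S^n,J^n)$). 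Thus the jammer faces the finite-dimensional problem of maximizing $\sum_i\tfrac12\log\!\big(2\pi e\,v(p_i,\rho_i)\big)$ subject to $\sum_i p_i\le nP_J$, $p_i\ge0$, and $\rho_i^2\le p_i\sigma_S^2$.

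Carrying out this optimization is the step I expect to be the main obstacle. Concavity of $\log$ gives $\tfrac1n\sum_i\tfrac12\log(2\pi e\,v(p_i,\rho_i))\le\tfrac12\log(2\pi e\,\bar v)$ with $\bar v=\tfrac1n\sum_i v(p_i,\rho_i)$, so it suffices to bound $\bar v$. Writing $v=(P_U+\alpha^2\sigma_S^2)-g$ with $g(p,\rho)=\big(P_U+\alpha\sigma_S^2+\alpha\rho\big)^2/\big(P_U+\sigma_S^2+\sigma^2+p+2\rho\big)$, one checks that $g$ is jointly convex (a squared affine function over a positive affine function) and that the feasible region $\{p\le P_J,\ p\ge0,\ \rho^2\le p\sigma_S^2\}$ is convex; Jensen and Cauchy--Schwarz then yield $\tfrac1n\sum_i g(p_i,\rho_i)\ge g(\bar p,\bar\rho)$ with $(\bar p,\bar\rho)$ feasible for the scalar problem, so $\bar v\le v(P_J,0)$ provided $g$ is minimized at $(P_J,0)$. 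For the latter, $\partial g/\partial p<0$ forces $p=P_J$, and writing $D:=P_U+P_J+\sigma^2$ (so $P_U/\alpha=D$) gives $g(P_J,\rho)=\alpha^2(D+\sigma_S^2+\rho)^2/(D+\sigma_S^2+2\rho)$, whose $\rho$-derivative has the sign of $\rho$ on the feasible interval; hence $\rho=0$ is optimal. Consequently $h(U^n\mid Y^n)\le\tfrac n2\log\!\big(2\pi e\,v(P_J,0)\big)$, with $v(P_J,0)=P_U-\alpha^2D=P_U(P_J+\sigma^2)/D$.

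Finally I would verify that every inequality above is tight for the i.i.d.\ $\mathcal{N}(0,P_J)$ jammer independent of $S^n$: for that jammer $(U^n,Y^n)$ is an i.i.d.\ jointly Gaussian sequence, so the conditioning step, the Gaussian-entropy step, and the Jensen step all hold with equality. Hence this jammer attains $\max h(U^n\mid Y^n)$, equivalently $\min\big(I(U^n;Y^n)-I(U^n;S^n)\big)$, which proves the lemma (and along the way identifies the value as $\tfrac12\log(1+P_U/(P_J+\sigma^2))$ per symbol). An alternative to the single-letterization above is to mimic the proof of Lemma~\ref{lem:si:game:saddle:point} step by step --- reducing successively to zero-mean, to jointly-Gaussian-with-state linear, and to i.i.d.\ linear jammers $J_i=\beta S_i+R_i$ --- and then solve the resulting scalar optimization in $(\beta,\sigma_R^2)$, which again returns $\beta=0$, $\sigma_R^2=P_J$.
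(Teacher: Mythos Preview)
Your proof is correct, but it follows a different route from the paper's. The paper's argument hinges on one clean observation you did not use: because $U_i=X_i+\alpha S_i$, subtracting $\alpha Y_i$ from $U_i$ eliminates the state entirely,
\[
h(U^n\mid Y^n)=h\big((1-\alpha)X^n-\alpha J^n-\alpha Z^n\mid Y^n\big)\le h(W^n)\le \tfrac12\log\big((2\pi e)^n|\Lambda_W|\big),
\]
with $W^n=(1-\alpha)X^n-\alpha J^n-\alpha Z^n$. Since $X^n\perp(S^n,J^n)$ under DPC and $Z^n$ is independent of everything, $\Lambda_W=(1-\alpha)^2P_U I+\alpha^2\Lambda_J+\alpha^2\sigma^2 I$ depends only on $\Lambda_J$, \emph{not} on the state correlation of $J^n$. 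Hadamard plus AM--GM plus the power constraint then give $|\Lambda_W|\le\big((1-\alpha)^2P_U+\alpha^2P_J+\alpha^2\sigma^2\big)^n=|\Lambda_{W'}|$, and a direct check shows that for the i.i.d.\ $\mathcal N(0,P_J)$ jammer one has $W'^n\perp Y'^n$ (this is exactly where $\alpha=P_U/(P_U+P_J+\sigma^2)$ is used), so every inequality is tight.

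Your approach instead single-letterizes to $\sum_i h(U_i\mid Y_i)$, bounds each term by the Gaussian LLSE error variance $v(p_i,\rho_i)$ depending explicitly on the per-symbol correlation $\rho_i=\mathbb E[S_iJ_i]$, and then carries out a two-variable optimization in $(p,\rho)$. This is valid (your convexity/Jensen reduction and the scalar calculation both check out), but more laborious: the irrelevance of the state correlation emerges only at the end, from the computation that $g(P_J,\rho)=\alpha^2(D+\sigma_S^2+\rho)^2/(D+\sigma_S^2+2\rho)$ is minimized at $\rho=0$, a fact that again uses $\alpha=P_U/D$. The paper's substitution reveals structurally \emph{why} the jammer gains nothing from the state, whereas your route discovers the same conclusion by optimization.
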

\indent\textit{Proof:}
For ease of notation given any signal $V^n=(V_1,V_2,\dots,V_n)$, we use the vector representation $\vec{V}=(V_1,V_2,\dots,V_n)^T$ whenever necessary. To begin with, let us fix the user strategy as the dirty paper coding strategy with $\alpha=P_U/(P_U+P_J+\sigma^2)$. Note that this user strategy is an i.i.d. Gaussian strategy. Recall that the jammer chooses a signal $J^n$ with the aim to minimize $I(U^n;Y^n)-I(U^n;S^n)$ in the Costa game. Since $J^n$ is a feasible signal, we have $\mathbb{E}[\vec{J}^T\vec{J}]\leq P_J$. Referring to~\eqref{eq:costa:mutual:inf:1}, it is seen that the jammer can only affect the term $h(U^n|Y^n)$ so as to minimize $I(U^n;Y^n)-I(U^n;S^n)$. Now for any $J^n$, we have
\begin{subequations}\label{eq:jammer:in:costa:1}
\begin{IEEEeqnarray}{rCl}
h(U^n|Y^n)&\stackrel{(a)}=&h(X^n+\alpha S^n|Y^n)\\
          &=&h(X^n+\alpha S^n-\alpha Y^n|Y^n)\nonumber\\  
					&=&h((1-\alpha)X^n-\alpha J^n-\alpha Z^n|Y^n)\nonumber\\  
					&\stackrel{(b)}\leq& h((1-\alpha)X^n-\alpha J^n-\alpha Z^n)\\
					&\stackrel{(c)}\leq& \frac{1}{2}\log\left( (2\pi e)^n |\Lambda_{W}| \right)	.
\end{IEEEeqnarray} 
\end{subequations}
Here, $W^n=((1-\alpha)X^n-\alpha J^n-\alpha Z^n)$. Also, $\Lambda_{W}$ is the covariance matrix of $W^n$ and  $|\Lambda_{W}|$ its determinant. Note that $(a)$ follows from the fact that user employs a dirty paper coding scheme. Conditioning reduces entropy and hence, we have $(b)$. Finally, $(c)$ is true from the well known fact that for a given covariance matrix, the Gaussian distribution maximizes entropy~\cite{elgamal-kim}.

Now, let us define another Gaussian jamming strategy $J'^n=(J'_1,J'_2,\dots,J'_n)$, where $J'^n\sim\mathcal{N}(0,P_J\vec{I}_n)$ and $J'^n$ is independent of the state $S^n$. Here, $\vec{I}_n\in\mathbb{R}^{n\times n}$ is the identity matrix. Note that $J'^n$ is a feasible jamming strategy since $\mathbb{E}[\vec{J'}^T\vec{J'}]=P_J\leq P_J$. For the jamming strategy $J'^n$, let $Y'^n$ be the corresponding channel output where
\begin{equation}\label{eq:Y':defn}
Y'^n=X^n+S^n+J'^n+Z^n.
\end{equation}
Also, define $W'^n$ as follows.
\begin{equation}\label{eq:W':definition}
W'^n=((1-\alpha)X^n-\alpha J'^n-\alpha Z^n)
\end{equation}
Let $\Lambda_{W'}$ be the covariance matrix of $W'^n$ and let $|\Lambda_{W'}|$ be its determinant. 
Then, we have the following.
\begin{subequations}\label{eq:lambda:W:expression}
\begin{IEEEeqnarray}{rCl}
\Lambda_{W}&=&(1-\alpha^2)\Lambda_X+\alpha^2 \Lambda_J+\alpha^2 \Lambda_Z\\
\Lambda_{W'}&=&(1-\alpha^2)\Lambda_X+\alpha^2 \Lambda_{J'}+\alpha^2 \Lambda_Z.
\end{IEEEeqnarray}
\end{subequations}
We now make a series of claims.
\begin{claim}\label{claim:lambda:W:lambda:W'}
$|\Lambda_{W}|\leq|\Lambda_{W'}|$.
\end{claim}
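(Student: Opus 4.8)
The plan is to reduce this claim to a one-line application of the AM--GM inequality (equivalently, concavity of $\log\det$ over positive semidefinite matrices of bounded trace). First I would pin down the three covariance matrices appearing in~\eqref{eq:lambda:W:expression} under the hypotheses of Lemma~\ref{lem:u:DPC:j:iid}. Since the user employs the DPC strategy, $X^n$ is i.i.d.\ $\mathcal N(0,P_U)$, so $\Lambda_X=P_U\vec I_n$; the channel noise is i.i.d., so $\Lambda_Z=\sigma^2\vec I_n$; and $\Lambda_{J'}=P_J\vec I_n$ by the very definition of $J'^n$ in~\eqref{eq:W':definition}. Substituting into~\eqref{eq:lambda:W:expression} gives
\[
\Lambda_{W}=c\,\vec I_n+\alpha^2\Lambda_J,\qquad \Lambda_{W'}=(c+\alpha^2 P_J)\,\vec I_n,\qquad c:=(1-\alpha^2)P_U+\alpha^2\sigma^2,
\]
where $c>0$ because $\alpha=P_U/(P_U+P_J+\sigma^2)\in(0,1)$ makes both summands strictly positive. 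Finally, $\Lambda_J$ is the covariance matrix of the (zero-mean, by the earlier claim) feasible jammer, hence $\Lambda_J\succeq 0$ and $\operatorname{tr}(\Lambda_J)=\mathbb{E}[\vec J^T\vec J]\le n P_J$.

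Next I would diagonalize: letting $\lambda_1,\dots,\lambda_n\ge 0$ be the eigenvalues of $\Lambda_J$, the matrix $\Lambda_W=c\vec I_n+\alpha^2\Lambda_J$ has eigenvalues $c+\alpha^2\lambda_i$, so $|\Lambda_W|=\prod_{i=1}^n(c+\alpha^2\lambda_i)$. By AM--GM,
\[
\prod_{i=1}^n (c+\alpha^2\lambda_i)\;\le\;\Bigl(\tfrac1n\sum_{i=1}^n (c+\alpha^2\lambda_i)\Bigr)^{\!n}=\Bigl(c+\tfrac{\alpha^2}{n}\operatorname{tr}(\Lambda_J)\Bigr)^{\!n}\le (c+\alpha^2P_J)^n=|\Lambda_{W'}|,
\]
the last step using $\operatorname{tr}(\Lambda_J)\le nP_J$ and $\alpha^2>0$. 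This is precisely the claim, and since $\lambda\mapsto\log(c+\alpha^2\lambda)$ is strictly concave on $[0,\infty)$, equality forces $\lambda_1=\dots=\lambda_n=P_J$, i.e.\ $\Lambda_J=P_J\vec I_n$ — the i.i.d.\ state-independent jamming covariance. Note that the argument used nothing about the jammer beyond $\Lambda_J\succeq 0$ and the trace bound, so it in fact bounds $|\Lambda_W|$ over \emph{all} feasible jamming covariances simultaneously.

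I do not anticipate a genuine obstacle; the computation is essentially a convexity argument. The only points needing a little care are bookkeeping ones: verifying $c>0$ (done above), and noting that~\eqref{eq:lambda:W:expression} has no cross-covariance terms — which is where one invokes that $Z^n$ is independent of everything and that, under the DPC strategy, $X^n$ is independent of $S^n$ and hence of the state-dependent jammer $J^n$. The subsequent claims in the proof of Lemma~\ref{lem:u:DPC:j:iid} will presumably feed this determinant inequality back into~\eqref{eq:jammer:in:costa:1} to conclude $h(U^n|Y^n)\le h(U^n|Y'^n)$, and therefore $I(U^n;Y^n)-I(U^n;S^n)\ge I(U^n;Y'^n)-I(U^n;S^n)$, i.e.\ that i.i.d.\ Gaussian state-independent jamming is the jammer's best response to DPC.
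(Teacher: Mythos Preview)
Your argument is correct and follows essentially the same route as the paper: write $|\Lambda_W|$ as a product of $n$ positive terms, apply AM--GM, and invoke the power constraint $\sum_i \mathbb{E}[J_i^2]\le nP_J$ on the resulting average. The only cosmetic difference is that the paper reaches the product via Hadamard's inequality on the diagonal entries of $\Lambda_W$, whereas you diagonalize $\Lambda_J$ and work with its eigenvalues directly; since $\Lambda_W=c\,\vec I_n+\alpha^2\Lambda_J$ differs from $\alpha^2\Lambda_J$ by a scalar multiple of the identity, both product representations collapse to the same AM--GM bound on the trace.
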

%
\indent\textit{Proof:}
Consider the following. 
\begin{subequations}\label{eq:lambda:J:lambda:J'}
\begin{IEEEeqnarray}{rCl}
|\Lambda_{W}|&\stackrel{(a)}\leq& \prod_{i=1}^n \left((1-\alpha)^2P_U+\alpha^2\mathbb{E}[J_i^2]+\alpha^2 \sigma^2\right)\\
             &=& \left(\left[\prod_{i=1}^n \left((1-\alpha)^2P_U+\alpha^2\mathbb{E}[J_i^2]+\alpha^2 \sigma^2\right)\right]^{\frac{1}{n}}\right)^n\nonumber\\
             &\stackrel{(b)}\leq& \left(\frac{\sum_{i=1}^n \left((1-\alpha)^2P_U+\alpha^2\mathbb{E}[J_i^2]+\alpha^2 \sigma^2\right)}{n}\right)^n\\
						&=& \left((1-\alpha)^2P_U+\alpha^2\frac{\sum_{i=1}^n \mathbb{E}[J_i^2]}{n}+\alpha^2 \sigma^2\right)^n\nonumber\\
						&\stackrel{(c)}\leq& \left((1-\alpha)^2P_U+\alpha^2 P_J+\alpha^2 \sigma^2\right)^n\\
						&=& |(1-\alpha)^2 P_U \vec{I}_n+\alpha^2P_J\vec{I}_n+\alpha^2 \sigma^2\vec{I}_n|\nonumber\\
						&=&|(1-\alpha^2)\Lambda_X+\alpha^2 \Lambda_{J'}+\alpha^2 \Lambda_Z|\nonumber\\
						&\stackrel{(d)}=&|\Lambda_{W'}|
\end{IEEEeqnarray}
\end{subequations}
Here, $(a)$ follows from Hadamard's inequality~\cite{elgamal-kim}. 
 Since the Geometric Mean (GM) is less than the Arithmetic Mean (AM), we have $(b)$. The feasibility of the jammer strategy requires $\sum_{i=1}^n\mathbb{E}[J_i^2]\leq n P_J$ and hence, $(c)$ is true. Finally, $(d)$ follows from~\eqref{eq:lambda:W:expression}. ~\IEEEQEDopen\
%
\begin{claim}\label{claim:W':uncorrelated:Y'}
$W'^n$ is independent of $Y'^n$.
\end{claim}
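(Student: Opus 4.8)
\textit{Proof proposal.}
The plan is to reduce the claim to a covariance computation by exploiting joint Gaussianity. First I would observe that under the DPC user strategy the vector $\vec{X}$ is i.i.d.\ $\mathcal{N}(0,P_U)$ and independent of $\vec{S}$, the jamming signal $\vec{J'}$ is $\mathcal{N}(0,P_J\vec{I}_n)$ and independent of $\vec{S}$ (hence of $\vec{X}$), and $\vec{Z}$ is i.i.d.\ $\mathcal{N}(0,\sigma^2)$ independent of all of these. Consequently the stacked vector $(\vec{X},\vec{S},\vec{J'},\vec{Z})$ is zero-mean jointly Gaussian with block-diagonal covariance, and since both $W'^n = (1-\alpha)X^n - \alpha J'^n - \alpha Z^n$ (from~\eqref{eq:W':definition}) and $Y'^n = X^n + S^n + J'^n + Z^n$ (from~\eqref{eq:Y':defn}) are linear functions of this vector, the pair $(\vec{W'},\vec{Y'})$ is itself jointly Gaussian. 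For jointly Gaussian random vectors, independence is equivalent to vanishing cross-covariance, so it suffices to show $\mathbb{E}[\vec{W'}\vec{Y'}^T]=\vec{0}$.

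Next I would expand the cross-covariance:
\begin{equation*}
\mathbb{E}[\vec{W'}\vec{Y'}^T]=\mathbb{E}\left[\left((1-\alpha)\vec{X}-\alpha\vec{J'}-\alpha\vec{Z}\right)\left(\vec{X}+\vec{S}+\vec{J'}+\vec{Z}\right)^T\right].
\end{equation*}
Using zero means and the mutual independence of $\vec{X}$, $\vec{S}$, $\vec{J'}$, $\vec{Z}$, every cross term vanishes and only the ``matched'' terms survive, giving
\begin{equation*}
\mathbb{E}[\vec{W'}\vec{Y'}^T]=\left((1-\alpha)P_U-\alpha P_J-\alpha\sigma^2\right)\vec{I}_n.
\end{equation*}
Finally I would substitute the DPC value $\alpha=P_U/(P_U+P_J+\sigma^2)$, so that $1-\alpha=(P_J+\sigma^2)/(P_U+P_J+\sigma^2)$, and check that $(1-\alpha)P_U = P_U(P_J+\sigma^2)/(P_U+P_J+\sigma^2) = \alpha(P_J+\sigma^2)$; hence the scalar multiplying $\vec{I}_n$ is zero. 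Therefore $\vec{W'}$ and $\vec{Y'}$ are uncorrelated, and being jointly Gaussian, independent.

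I do not expect any real obstacle here: the only thing to be careful about is that the specific choice of $\alpha$ is exactly what forces orthogonality of $W'^n$ (the DPC ``estimation error'' direction) and the observation $Y'^n$ — this is precisely the MMSE/orthogonality property underlying dirty paper coding, and it is the single point where the value of $\alpha$ is used.
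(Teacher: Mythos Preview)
Your proposal is correct and follows essentially the same approach as the paper: both reduce the independence claim to verifying $\mathbb{E}[\vec{W'}\vec{Y'}^T]=\vec{0}$ via joint Gaussianity, expand the cross-covariance using mutual independence and zero means of $\vec{X},\vec{S},\vec{J'},\vec{Z}$ to obtain $((1-\alpha)P_U-\alpha P_J-\alpha\sigma^2)\vec{I}_n$, and then substitute $\alpha=P_U/(P_U+P_J+\sigma^2)$ to conclude. Your write-up is slightly more careful in explicitly justifying joint Gaussianity of $(\vec{W'},\vec{Y'})$ before invoking the uncorrelated-implies-independent step, but otherwise the arguments coincide.
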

\indent\textit{Proof:}
Since $W'^n$ and $Y'^n$ are Gaussian, it is sufficient to show that $W'^n$ is uncorrelated to $Y'^n$ i.e. $\mathbb{E}[\vec{W'}\vec{Y'}^T]=\mathbb{E}[\vec{W'}]\mathbb{E}[\vec{Y'}^T]$. We now proceed to establish this fact. 
\begin{subequations}\label{eq:E:W:E:Y}
\begin{IEEEeqnarray*}{rCl}
\mathbb{E}[\vec{W'}]\mathbb{E}[\vec{Y'}^T]&=&\mathbb{E}[(1-\alpha)\vec{X}-\alpha \vec{J'}-\alpha \vec{Z}]\mathbb{E}[\vec{Y'}^T]\\
                                        &=& \vec{0}.\mathbb{E}[\vec{Y'}^T]=\vec{0}.
\end{IEEEeqnarray*}
\end{subequations}
\addtocounter{equation}{-1}
Similarly,
\begin{subequations}\label{eq:E:WY}
\begin{IEEEeqnarray}{rCl}
\mathbb{E}[\vec{W'}\vec{Y'}^T]&=&\mathbb{E}[((1-\alpha)\vec{X}-\alpha \vec{J'}-\alpha \vec{Z})\vec{Y'}]\nonumber\\
                          &=&\mathbb{E}[(1-\alpha)\vec{X}\vec{Y'}^T-\alpha \vec{J'}\vec{Y'}^T-\alpha \vec{Z}\vec{Y'}^T]\nonumber\\
													&\stackrel{(a)}=&\mathbb{E}[(1-\alpha)\vec{X}\vec{X}^T-\alpha\vec{J'}\vec{J'}^T-\alpha \vec{Z}\vec{Z}^T]\\
													&=&(1-\alpha)P_U\vec{I}_n-\alpha P_J\vec{I}_n-\alpha \sigma^2\vec{I}_n\nonumber\\
													&\stackrel{(b)}=&\vec{0}.
\end{IEEEeqnarray}
\end{subequations}
\addtocounter{equation}{-1}
Here, $(a)$ follows from noting that $X^n$, $S^n$, $J'^n$ and $Z^n$ are independently chosen, and in addition, since $\mathbb{E}[\vec{S}]=\vec{0}$, $\mathbb{E}[\vec{Z}]=\vec{0}$. We also need $\mathbb{E}[\vec{X}]=\vec{0}$ and this follows from the fact that the user strategy is DPC. The equality in $(b)$ follows from using $\alpha=P_U/(P_U+P_J+\sigma^2)$ for the DPC. ~\IEEEQEDopen

We now complete the proof of Lemma~\ref{lem:u:DPC:j:iid} by showing that $J'^n$ is the best feasible jamming signal. Here
%
\begin{subequations}\label{eq:jammer:in:costa:2}
\begin{IEEEeqnarray}{rCl}
h(U^n|Y^n)&\stackrel{(a)}\leq& \frac{1}{2}\log \left( (2\pi e)^n|\Lambda_{{W}}|\right) \\
                &\stackrel{(b)}\leq & \frac{1}{2}\log \left( (2\pi e)^n|\Lambda_{{W}'}|\right) \\    
                &\stackrel{(c)}= & h((1-\alpha)X^n-\alpha J'^n-\alpha Z^n)\\     
								&\stackrel{(d)}= & h((1-\alpha)X^n-\alpha J'^n-\alpha Z^n|Y'^n)\\     
								&= & h((1-\alpha)X^n-\alpha J'^n-\alpha Z^n+\alpha Y'^n|Y'^n) \nonumber\\
								&\stackrel{(e)}= & h(U^n|Y'^n).
\end{IEEEeqnarray}
\end{subequations}
\addtocounter{equation}{-1}
Now, $(a)$ follows from~\eqref{eq:jammer:in:costa:1} while $(b)$ follows from Claim~\ref{claim:lambda:W:lambda:W'}. 
Using the definition of $W'^n$ in~\eqref{eq:W':definition}, $(c)$ is true. Claim~\ref{claim:W':uncorrelated:Y'} results in the equality in $(d)$. Finally, $(e)$ follows from noting that $U^n=X^n+\alpha S^n$ and~\eqref{eq:Y':defn}.~\IEEEQEDclosed

We discuss the equilibrium saddle point value after determining the best user strategy for an i.i.d. jamming strategy.
\begin{lemma}
When the jamming strategy is i.i.d. Gaussian, the best user strategy is the the dirty paper coding strategy.
\end{lemma}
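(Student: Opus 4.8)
The plan is to show that when the jammer uses the i.i.d.\ Gaussian signal $J'^n\sim\mathcal{N}(0,P_J\vec{I}_n)$ (independent of $X^n$ and $S^n$), no feasible user strategy can beat the rate $\tfrac12\log\!\big(1+P_U/(P_J+\sigma^2)\big)$, while the DPC strategy attains it; hence DPC is a best response. Together with Lemma~\ref{lem:u:DPC:j:iid} this verifies that the DPC / i.i.d.-Gaussian pair is a saddle point of the Costa game, completing the proof of Theorem~\ref{thm:costa:game:saddle:point}. First I would note that with this jammer the channel collapses to $Y^n=X^n+S^n+\tilde{Z}^n$ with $\tilde{Z}^n=J'^n+Z^n$ i.i.d.\ $\mathcal{N}(0,P_J+\sigma^2)$ and independent of $(X^n,S^n)$, i.e.\ exactly Costa's dirty-paper channel with effective noise power $P_J+\sigma^2$ and state known only at the encoder.

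Next I would take an arbitrary user strategy $p(u^n,x^n\mid s^n)$ with $\sum_i\mathbb{E}[X_i^2]\le nP_U$ and upper bound the game utility. Since the joint law factors as $p(s^n)p(u^n,x^n\mid s^n)p(y^n\mid x^n)$, the Markov chain $U^n-(X^n,S^n)-Y^n$ holds, so the lemma immediately preceding Theorem~\ref{thm:costa:game:saddle:point} gives $I(U^n;Y^n)-I(U^n;S^n)\le I(X^n;Y^n\mid S^n)$. I would then single-letterize the right-hand side using $h(Y^n\mid X^n,S^n)=h(\tilde{Z}^n)=\sum_i h(\tilde{Z}_i)$ (independence of the effective noise), the entropy chain rule, and ``conditioning reduces entropy'':
\begin{equation*}
I(X^n;Y^n\mid S^n)=h(Y^n\mid S^n)-\sum_{i=1}^n h(\tilde{Z}_i)\le \sum_{i=1}^n\big(h(Y_i\mid S_i)-h(\tilde{Z}_i)\big)=\sum_{i=1}^n I(X_i;Y_i\mid S_i).
\end{equation*}
For each term, dropping the conditioning on $S_i$ and invoking the Gaussian maximum-entropy property for a fixed variance gives $I(X_i;Y_i\mid S_i)\le h(X_i+\tilde{Z}_i)-h(\tilde{Z}_i)\le\tfrac12\log\!\big(1+\mathbb{E}[X_i^2]/(P_J+\sigma^2)\big)$; summing and applying Jensen's inequality (concavity of $\log$) together with $\tfrac1n\sum_i\mathbb{E}[X_i^2]\le P_U$ yields $I(U^n;Y^n)-I(U^n;S^n)\le\tfrac{n}{2}\log\!\big(1+P_U/(P_J+\sigma^2)\big)$ for every feasible user strategy.

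Finally I would check that DPC meets this bound with equality. With $X^n$ i.i.d.\ $\mathcal{N}(0,P_U)$ independent of $S^n$ and $U_i=X_i+\alpha S_i$, $\alpha=P_U/(P_U+P_J+\sigma^2)$, the utility single-letterizes to $n\big(I(U;Y)-I(U;S)\big)$, and Costa's computation for the effective channel gives $I(U;Y)-I(U;S)=\tfrac12\log\!\big(1+P_U/(P_J+\sigma^2)\big)$, which is precisely the upper bound just derived. Hence DPC is a best response against i.i.d.\ Gaussian jamming, and evaluating the utility at this saddle point gives the capacity~\eqref{eq:costa:game:saddle:point}. The step I expect to be the main obstacle is the single-letterization: an arbitrary user strategy may introduce memory and let $X^n$ depend on the entire state vector $S^n$, so one must argue carefully that the chain-rule / conditioning-reduces-entropy bound is valid here — this works because the effective noise $\tilde{Z}^n$ is genuinely i.i.d.\ and independent of $(X^n,S^n)$. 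The only other point requiring care is verifying the Markov chain $U^n-(X^n,S^n)-Y^n$, which follows directly from the game's joint-distribution factorization.
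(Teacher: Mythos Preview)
Your argument is correct and follows the same opening move as the paper: once the jammer is i.i.d.\ $\mathcal{N}(0,P_J)$ and independent of $(X^n,S^n)$, the channel is exactly Costa's dirty-paper channel with effective noise $\tilde Z_i=J'_i+Z_i\sim\mathcal{N}(0,P_J+\sigma^2)$, so DPC with $\alpha=P_U/(P_U+P_J+\sigma^2)$ is optimal. The paper's proof stops right there and simply cites Costa's theorem for both the achievability and the converse.

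Where you differ is in re-deriving the converse explicitly rather than quoting it: you invoke the paper's own lemma $I(U^n;Y^n)-I(U^n;S^n)\le I(X^n;Y^n\mid S^n)$, single-letterize via the chain rule and ``conditioning reduces entropy'' (valid because $\tilde Z^n$ is i.i.d.\ and independent of $(X^n,S^n)$), and finish with Gaussian max-entropy plus Jensen. This is a legitimate alternative route and has the merit of being self-contained within the paper's framework; the paper's version is shorter because Costa's original result already contains an $n$-letter converse, so the single-letterization work you do is subsumed in the citation. Either way the saddle-point value~\eqref{eq:costa:game:saddle:point} follows.
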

\balance
\indent\textit{Proof:}
Let us assume that the jammer chooses an i.i.d. signal $J^n=(J_1,J_2,\dots,J_n)$, where $J_i\sim\mathcal{N}(0,P_J)$.
Then, we have the scenario depicted in Fig.~\ref{fig:jamming:iid:fixed}. 
\begin{figure}[!ht]
  \begin{center}
    \includegraphics[trim=0cm 10cm 0cm 0cm,scale=0.3]{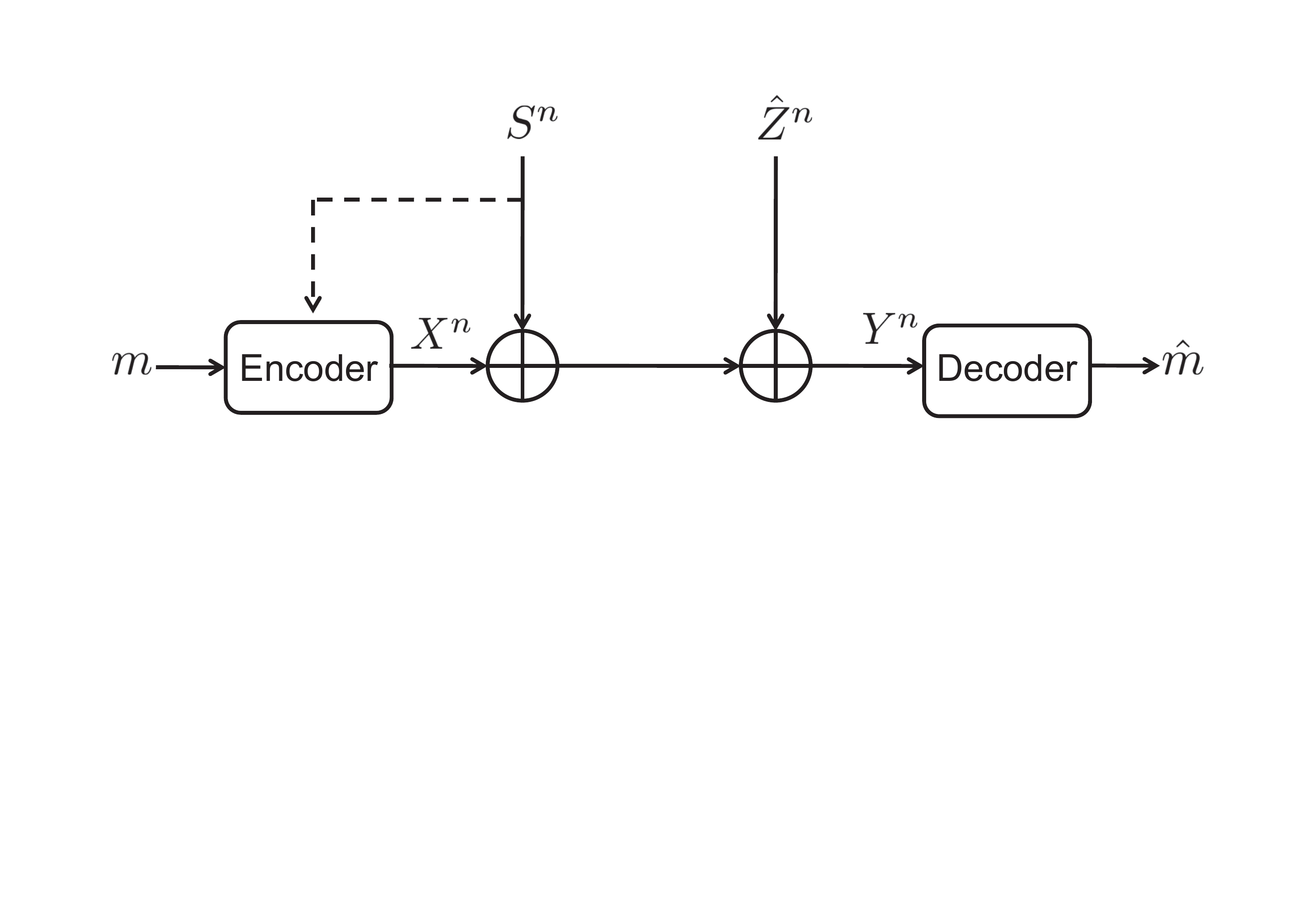}
    \caption{The Costa Game Equivalent Setup when Jamming is i.i.d. and independent of State}
    \label{fig:jamming:iid:fixed}
  \end{center}
\end{figure}
Here, $\hat{Z}^n=(\hat{Z}_1,\hat{Z}_2,\dots,\hat{Z}_n)$ is the effective independent noise in the channel, where $\hat{Z}_i=Z_i+J_i$ for $i=1,2,\dots,n$. This is an instance of a channel with state $S^n$ and noise $\hat{Z}^n$, where $S^n$ is known to the encoder . From~\cite{costa}, we know that the dirty paper code with $\alpha=P_U/(P_U+P_J+\sigma^2)$ achieves the capacity for this channel. This coding scheme is an i.i.d. Gaussian scheme. ~\IEEEQEDclosed
%

Note that the above choice of saddle point strategies results in the saddle point value equal to $C_{C-J}$ given in~\eqref{eq:costa:game:saddle:point}. This follows from the fact that this is an instance of a Costa channel where the average user	 power constraint is $P_U$ and the effective AWGN noise in the channel has variance $\sigma^2+P_J$. This equilibrium saddle point value is the capacity of the Costa channel in the presence of the jammer. Finally, from~\eqref{eq:costa:game:saddle:point} and~\eqref{eq:capacity:SI:jammer}, the surprising result that the capacity of the Costa channel in the presence of jamming is equal to that of the Side Information channel with a jammer is seen to be true.
\section{Conclusion}\label{sec:conclusion}
We examined the problem of the capacity of an AWGN channel with additive i.i.d. Gaussian state in the presence of an additive jamming adversary, where the encoder and jammer had non-causal access to the state. The user-jammer interaction was modeled as a zero sum game and the capacity of the channel, which was defined as the Nash equilibrium of this game, was established. At equilibrium, the user strategy was dirty paper coding and the jammer strategy was i.i.d. Gaussian jamming. Interestingly, however, this equilibrium jamming signal was independent of state. A surprising result, that the capacity was unchanged even if, in addition to the encoder and the jammer, the decoder too had non-causal knowledge of the state, was proved.
\section*{Acknowledgment}
The authors thank Sibiraj B. Pillai for fruitful discussions.
The work  was supported in part by the
Bharti Centre for Communication, IIT Bombay, a grant from the Information Technology Research Academy, Media Lab Asia, to IIT Bombay and TIFR, a grant from the Department
of Science and Technology, Government of India, to IIT Bombay, and the Ramanujan Fellowship from the Department of Science and Technology, Government of India, to V. M. Prabhakaran.
\addcontentsline{toc}{section}{Acknowledgment} 

\bibliographystyle{IEEEtran}

\bibliography{IEEEabrv,References}
\end{document}